\let\ACMmaketitle=\maketitle
\renewcommand{\maketitle}{\begingroup\let\footnote=\thanks 
	\ACMmaketitle\endgroup}
\renewcommand\footnotetextcopyrightpermission[1]{}
\newif\ifcomments
\newtheorem*{corollary*}{Corollary}
\newtheorem{corollary}{Corollary}
\newtheorem{lma}{Lemma}
\newcommand\arr[1]{{#1}}
\newcommand\phii[1]{{#1}}
\newcommand\var[1]{{#1}}
\newcommand\range[1]{{#1}}
\newcommand\varvec[1]{{\textbf{#1}}}
\newcommand\vecrange[1]{{\textbf{#1}}}
\newcommand\Ao[1]{{\textbf{#1}}}
\newcommand\Bj[1]{{\textbf{#1}}}
\newcommand\Bo[1]{{\textbf{#1}}}
\newcommand\Intuition[1]{\textbf{Intuition.} \emph{#1}}
\newcommand\Note[1]{\textbf{Note.} \emph{#1}}
\definecolor{darkgreen}{rgb}{0.0, 0.5, 0.13}
\newcommand\greg[1]{\textcolor{blue}{[Greg: #1]}}
\newcommand\mac[1]{\textcolor{red}{[Mac: #1]}}
\newcommand\yishai[1]{\textcolor{red}{[Yishai: #1]}}
\newcommand\greg[1]{\textcolor{blue}{}}
\newcommand\toskip[1]{\textcolor{green}{}}
\newcommand\mac[1]{\textcolor{red}{}}
\newcommand\yishai[1]{\textcolor{red}{}}
\DeclareMathOperator*{\argmax}{arg\,max}
\DeclareMathOperator*{\argmin}{arg\,min}
\definecolor{darkgrey}{RGB}{70,70,70}
\definecolor{lightgrey}{RGB}{200,200,200}
\bfseries\color{black!400!black},
\crefname{section}{§}{§§}
\Crefname{section}{§}{§§}
\newcommand*{\affmark}[1][*]{\textsuperscript{#1}}
\DeclareSymbolFont{matha}{OML}{txmi}{m}{it}
\DeclareMathSymbol{\varS}{\mathord}{matha}{83}
\colorlet{hlcolor}{yellow!20}
\newcommand{\macb}[1]{\textbf{\textsf{#1}}}
\begin{document}
	
	\newcommand{\conflux}{CO$\mathit{nf}$LUX\xspace}
	\newcommand{\xparting}{\mbox{$X$-Partitioning}\xspace}
	\newcommand{\xpart}{\mbox{$X$-partition}\xspace}

\title[Near-Optimal LU Factorization]{On the Parallel I/O Optimality of Linear 
Algebra 
	Kernels:
	Near-Optimal LU Factorization}         

\author{
	Grzegorz Kwasniewski\affmark[1], 
	Tal Ben-Nun\affmark[1], 
Alexandros Nikolaos Ziogas\affmark[1],\\
Timo Schneider\affmark[1], 
Maciej Besta\affmark[1],
 Torsten Hoefler\affmark[1]\\
	{\normalsize\affmark[1]Department of Computer Science, 
	ETH Zurich\vspace{2em}}
}

\renewcommand{\shortauthors}{G. Kwasniewski et al.}
\begin{abstract}
Dense linear algebra kernels, such as linear 
solvers or tensor 
contractions, are fundamental components of many scientific computing 
applications. In this work we present a novel method of 
deriving parallel 
I/O lower bounds for this broad family of programs. 
Based on the 
\xparting 
abstraction, our method explicitly captures inter-statement dependencies.
Applying our analysis to LU factorization, we derive 
\conflux,
 an LU 
algorithm with the parallel I/O cost of $N^3 / (P \sqrt{M})$ communicated 
elements per processor --- only $1/3\times$ over our 
established lower 
bound.
We evaluate 
\conflux 
on 
various problem sizes, demonstrating empirical results that match our 
theoretical analysis, communicating asymptotically less than 
Cray ScaLAPACK or SLATE, and outperforming the asymptotically-optimal 
CANDMC library. Running on $1$,$024$ nodes of Piz Daint, 
\conflux 
communicates 1.6$\times$ less than the second-best 
implementation and is 
expected to communicate 2.1$\times$ less on a full-scale run 
on Summit.
\end{abstract}

\maketitle

\section{Introduction}
\label{sec:intro}

Data movement is widely considered a bottleneck in high- performance
computing~\cite{survey}, often dominating time and energy consumption of
computations~\cite{kestor2013quantifying,padal}. Thus, deriving 
algorithmic I/O lower bounds has always been of 
theoretical interest~\cite{general_arrays, redblue}; and
developing I/O-efficient schedules is of high
practical value~\cite{maciejBC, edgarTradeoff}.
In linear 
algebra computations,
this challenge is exacerbated by the fact that the matrices 
of interest can be
prohibitively large. 
Simultaneously, large-scale linear algebra kernels such 
as matrix 
factorizations~\cite{meyer2000matrix,krishnamoorthy2013matrix}
 or tensor
contractions~\cite{solomonik2014massively}, are the basis of 
many problems in
scientific computing~\cite{joost, rectangularML}. Therefore,
accelerating these routines is of great 
significance for numerous domains.

Analyzing I/O bounds of linear algebra kernels dates back to 
a seminal work 
by 
Hong 
and Kung{~\cite{redblue}}, who derived a first asymptotic 
bound for 
matrix-matrix 
multiplication (MMM) using the red-blue pebble game 
abstraction. This method 
was 
subsequently extended and used by other works to derive 
asymptotic{~\cite{ElangoSymbolic}} and tight{~\cite{COSMA}} 
bounds for more 
complex programs.
Despite its expressibility, problems based on pebble game 
abstractions are
notoriously hard to solve, as they are P-SPACE complete in 
the general 
case{~\cite{redblueHard_}}.
Other techniques include methods based on the 
Loomis-Whitney 
inequality{~\cite{IronyMMM}}, {\cite{cholesky1}}, 
{\cite{ballard2011minimizing}},
{\cite{anotherLU}} and the polyhedral model program 
representation{~\cite{benabderrahmane2010polyhedral}}.
Ultimately, the
existing methods are either problem-specific and hard to 
generalize{~\cite{COSMA}}; provide only asymptotic or 
non-tight lower 
bounds
{~\cite{demmel4}},{~\cite{IronyMMM}}; or are limited to only 
single-statement 
micro kernels, unable to capture more complex 
dependencies{~\cite{general_arrays}},{~\cite{comm_DNN}}.

To tackle these challenges, we first provide a 
\emph{general} 
method
for deriving \emph{precise} I/O lower bounds of Disjoint 
Array Access 
Programs 
(DAAP) --- a broad range
of programs composed of a sequence
of statements enclosed in an arbitrary number of nested loops.
Within this class, we explicitly model both the 
\textit{per-statement data 
	dependencies}, using the 
{\xparting} 
abstraction{~\cite{COSMA}}, as well as 
\textit{inter-statement data 
dependencies}, in which we model potential data reuse.
In Section~\ref{sec:lu_lowerbound} we illustrate the applicability of our 
framework to derive a
parallel I/O lower bound of LU 
factorization: $\frac{2}{3}\frac{N^3}{P \sqrt{M}}$ elements, 
where $N$ is the  
matrix size, $P$ is the number of processors, and $M$ is the 
local memory size.

Moreover, in Section~\ref{sec:conflux}, we use the 
insights from deriving
the above lower bound to develop \conflux, a near \emph{Communication
	Optimal LU factorization, \xparting{}-based} algorithm.  
Our
algorithm minimizes data movement across the 2.5D processor decomposition 
using 
a 
row-masking 
tournament pivoting strategy, resulting in a communication requirement of  
$\frac{N^3}{P \sqrt{M}} + \mathcal{O}\big(\frac{N^2}{P}\big)$ 
elements per
processor, which leading order term is only a factor of 
$\frac{1}{3}$ 
over the lower bound.

In Section~\ref{sec:evaluation}, we measure the communication 
volume of
\conflux
and we compare to other modern implementations of LU
factorization. We consider a vendor- optimized ScaLAPACK from Cray's 
LibSci~\cite{scalapack} (an implementation tuned for Cray supercomputers based
on 2D decomposition), CANDMC~\cite{candmc, candmccode} (code based on
asymptotically optimal 2.5D decomposition), and SLATE~\cite{slate} 
(a recent library targeting exascale systems with an LU implementation based 
on 
2D decomposition). 
As the scope of this
work is the I/O complexity, 
we focus on the communication volume of these implementations.
We tested them on a wide range of problem 
sizes and numbers of
processors inspired by real scientific applications. 
In our experiments on Piz Daint, we measure up to 4.1x communication reduction 
compared to the second-best implementation. Furthermore, our 2.5D 
decomposition is asymptotically better than SLATE and LibSci, with even 
greater expected speedups on exascale machines. Compared to the 
communication-avoiding 
CANDMC library with the I/O cost of $5N^3/(P\sqrt{M})$ elements~\cite{2.5DLU}, 
\conflux communicates five times less.

\noindent
In this work, we provide the following contributions:
\begin{itemize}[leftmargin=*]
	\item A general method
	for deriving parallel I/O lower bounds of a broad
	range of linear algebra kernels.
	\item An I/O lower bound of parallel LU factorization.
	\item \conflux, a provably near-I/O-optimal parallel 
	algorithm for LU factorization.
	\item A full analysis of communication volume in \conflux and a
	comparison to the state-of-the-art implementations of LU factorization
	(LibSci, SLATE, CANDMC), showing consistent benefits
	of \conflux and thus our general 
	approach over state-of-the-art libraries.
\end{itemize}

\section{Background}
\label{sec:background}
\subsection{Machine Model}
\label{sec:machineModel}

To model the algorithmic I/O complexity, we start with a model of a 
sequential machine equipped with a two-level deep memory hierarchy 
(Sections~\ref{sec:boundsSingleStatement} and 
~\ref{sec:mult_statements}).
In Section~\ref{sec:parredblue}, we use  
the parallel machine model and show which complexity properties 
are invariant.

\noindent \macb{Sequential machine}. A computation is performed on a 
sequential machine 
with a fast memory of limited size and unlimited slow memory. The 
fast memory can hold up to $M$ elements at any given time. 
To perform any computation, all input elements must reside in fast 
memory, and the result is stored in fast memory.

\noindent \macb{Parallel machine}. The sequential model is extended to a 
machine equipped with $P$ processors, each equipped with a private 
fast memory of size $M$. There is no global memory of unlimited 
size --- instead, elements are transferred between processors' fast 
memories.

\subsection{Input Programs}
\label{sec:inputPrograms}

We consider a general class of programs that operate on 
multidimensional
arrays. Array
elements can be loaded from slow to fast memory, stored from fast to 
slow memory, and computed inside fast memory.  
Elements have \emph{versions}, which 
are incremented every time they are 
updated. 
We model the program execution as a computational directed acyclic 
graph (cDAG, details in Section~\ref{sec:pebblegame}), where each vertex 
corresponds to a different version of an element.
E.g., for a statement $A[i,j] 
\leftarrow f(A[i,j])$, a vertex corresponding to $A[i,j]$ \emph{after} 
applying $f$ is different from a vertex corresponding to $A[i,j]$ 
\emph{before} 
applying $f$.
In a cDAG, we model it as an edge from vertex $A[i,j]$ before $f$ to 
vertex $A[i,j]$ after $f$. Initial versions of each element do not have 
any incoming edges and thus form the cDAG inputs. 
\emph{The distinction between elements and vertices} is important for our I/O 
lower bounds analysis, as we will investigate how many vertices
are computed
for a 
given number of loaded vertices.

A program is a sequence of statements $S$ enclosed in loop 
nests, each of the following form (we use the loop nest notation used 
by Dinh and 
Demmel{~\cite{anotherDemmel}}):

{\small
	\begin{align}
	\nonumber
	\text{\textbf{for }} r^1 \in R^1,
	\text{\textbf{for }}  r^2 \in R^2(r^1),
	\dots
	\text{\textbf{for }}  r^l \in R^l(r^1, \dots, r^{l-1}): \\
	\nonumber
	S: A_0[\bm{\phi_0}(\bm{r})] \leftarrow 
	f(A_1[\bm{\phi_1}(\bm{r})], A_2[\bm{\phi_2}(\bm{r})], \dots, 
	A_m[\bm{\phi_m}(\bm{r})])
	\end{align}
}

\begin{table}[t]
	\setlength{\tabcolsep}{2pt}
	\renewcommand{\arraystretch}{1.2}
	\centering
	\footnotesize
	\sf
		\begin{tabular}{@{}l|ll@{}}
			\toprule
			\multirow{7}{*}{\begin{turn}{90}\textbf{Input prog. 
						(\S~\ref{sec:inputPrograms})}\end{turn}}
			& $A_0$ & Output of statement $S$. \\ 
			& $A_j,$~~~$j = 1,\dots, m$ & Input $j$ of statement $S$. \\ %
			& $\bm{r} = \left[r^1, \dots, r^l\right]$ 
			& 
			Iteration vector composed of $l$ iteration variables.\\
			& $R^t$
			& 
			\makecell[l]{Iteration domain of variable $r^t \in R^t$, which 
			may\\ 
				depend on iteration 
				variables $1\dots t-1$.\vspace{0.2em}}\\ 
			& $\bm{\phi}_j$ &  \makecell[l]{Access vector mapping 
				$dim(\bm{\phi}_j)$ iteration 
				variab-\\les to a $dim(A_j)$ dimensional address in array 
				$A_j$.}\\
			\midrule
			\multirow{10}{*}{\begin{turn}{90}\textbf{ 
						\xparting (\S~\ref{sec:pebblegame})}\end{turn}}
			& $G = (V,E)$& \makecell[l]{\textbf{c}omputational 
			\textbf{D}irected 
				\textbf{A}cyclic \textbf{G}raph (cDAG) with\\ $V$ vertices and 
				$E 
				\subset V \times V$ directed edges.}\\
			& $M$ & Number of red pebbles (size of the fast memory).\\
			& $V_h \subset V$ & \makecell[l]{An $h$-th subcomputation of an \\ 
				$X$-partition, 
				$h = 1,\dots, s$} \\
			& $\mathit{Dom}(V_h)$ & Dominator set of 
			subcomputation $V_h$.\\
			& $\mathcal{P}(X) = \left\{V_1,\dots,V_s\right\}$ &  
			\makecell[l]{An 
				\xpart composed of $s$ disjoint\\subcomputations.}\\
			& $\Pi(X)$ & The set of all \mbox{$X$}-partitions of size $X$.\\
			& $Q$ & A number of I/O operations of a schedule. \\
			& $\rho_h$ & The computational intensity of subcomputation $V_h$.\\
			& $\rho = \max_h\{\rho_1, \dots, \rho_s\}$ & The maximum 
			computational intensity 
			of $\mathcal{P}(X)$.\\
			\midrule
			\multirow{1}{*}{\begin{turn}{90}
					\textbf{\hspace{1.5em}DAAP sched. 
						(\S~\ref{sec:boundsSingleStatement})}
			\end{turn}} 
			& $R^t_h$ &  \makecell[l]{Set of all values iteration variable $t$ 
				takes\\ during 
				subcomputation $h$.	\vspace{0.2em}}\\
			& $R^k_{h,j}$ &  \makecell[l]{Set of all values $k$-th iteration 
				variable of access \\ 
				function vector $\phi_j$ takes during subcomputation 
				$h$.\vspace{0.2em}}\\
			& $\bm{R}_h$ &  \makecell[l]{Iteration domain of subcomputation $h$ 
				--- 
				set of all \\
				iteration vectors accessed during $h$.\vspace{0.2em}} \\
			& $|A_j(\bm{R}_h)|$ &  \makecell[l]{Number of different vertices 
				accessed \\from array 
				$A_j$ during subcomputation $h$.}\\		
			\bottomrule
		\end{tabular}%
	\caption{
		\textmd{Notation used in the paper.}
	}
	\label{tab:symbols}
\end{table}

\noindent
where (cf.~Figure~\ref{fig:prog_rep} and Table~\ref{tab:symbols} for
summaries):

\begin{enumerate}[leftmargin=1.0em]
	\item The \emph{statement} $S$ is nested in a loop nest of depth $l$.
	
	\item Each loop in the $t$-th level, $t = 1,\dots, l$ 
	is associated with 
	its \emph{iteration variable} \var{$r^t$}, which 
	iterates over 
	its set \linebreak \var{$r^t$} $\in$ \range{$R^t$}. Set \range{$R^t$} may 
	depend on iteration 
	variables from outer loops \var{$r^1$}$, \dots,$ 
	\var{$r^{t-1}$} (denoted as 
	\range{$R^t$}(\var{$r^1$}$, \dots, $\var{$r^{t-1}$})).
	
	\item All $l$ iteration variables form the \emph{iteration vector} 
	\linebreak
	\varvec{$\mathbf{r}$} $= [r^1, \dots, r^l]$ and we 
	define the \emph{iteration 
		domain} \vecrange{$\bm{R}$} as the set of 
	all iteration 
	vectors $\forall \bm{r} : \bm{r} \in \bm{R}$.
	
	\item \label{enum:arr_dim} {Each evaluation of statement $S$ is a 
		function on $m$ input elements, 
		each input belongs to a logical array {\arr{$A_j$}}. Different logical 
		arrays 
		may 
		refer to the same memory region. The 
		dimension of a logical array is denoted as $dim(A_j)$.}
	
	\item Elements of logical array \arr{$A_j$} are 
	referenced 
	by an \emph{access 
		function vector} \phii{$\bm{\phi_j}$} $= [\phi_j^1, \dots, 
	\phi_j^{dim(A_j)}]$, 
	which maps $dim(A_j)$ iteration variables to a \emph{unique} element in 
	array \arr{$A_j$} (access function vector is injective). Only 
	vertices associated with the newest element versions can be 
	referenced.
	
	\item \label{enum:ref_versons} 
	A given vertex can be 
	referenced by only one access 
	function vector per statement. We will refer to this as \emph{disjoint 
		access property}.
	
	\item {The 
		\emph{access dimension} of \mbox{$A_j(\bm{\phi}_j)$}, denoted 
		\mbox{$dim(A_j(\bm{\phi}_j))$}, is the number of 
		different 
		iteration variables present in \mbox{$\bm{\phi}_j$}.
		\emph{Example: consider access \mbox{$A_j[k,k]$} used, e.g., in 
			LU factorization. Its 
			access function vector 
			\mbox{$\bm{\phi}_j$} = \mbox{$[k,k]$} is a function of only one 
			iteration variable \mbox{$k$}. Therefore, \mbox{$dim(A_j) = 2$}, 
			but 
			\mbox{$dim(A_j(\bm{\phi_j})) = 1$}.}
	}
	If it is clear from the context, we will refer to 
		\mbox{$dim(A_j(\bm{\phi}_j))$} simply as \mbox{$dim(\bm{\phi}_j)$}.
	
	\item The result of a statement evaluation is stored in array 
	\arr{$A_0$}.
\end{enumerate}

\begin{figure*}[t]
	\includegraphics[width=2.1\columnwidth]{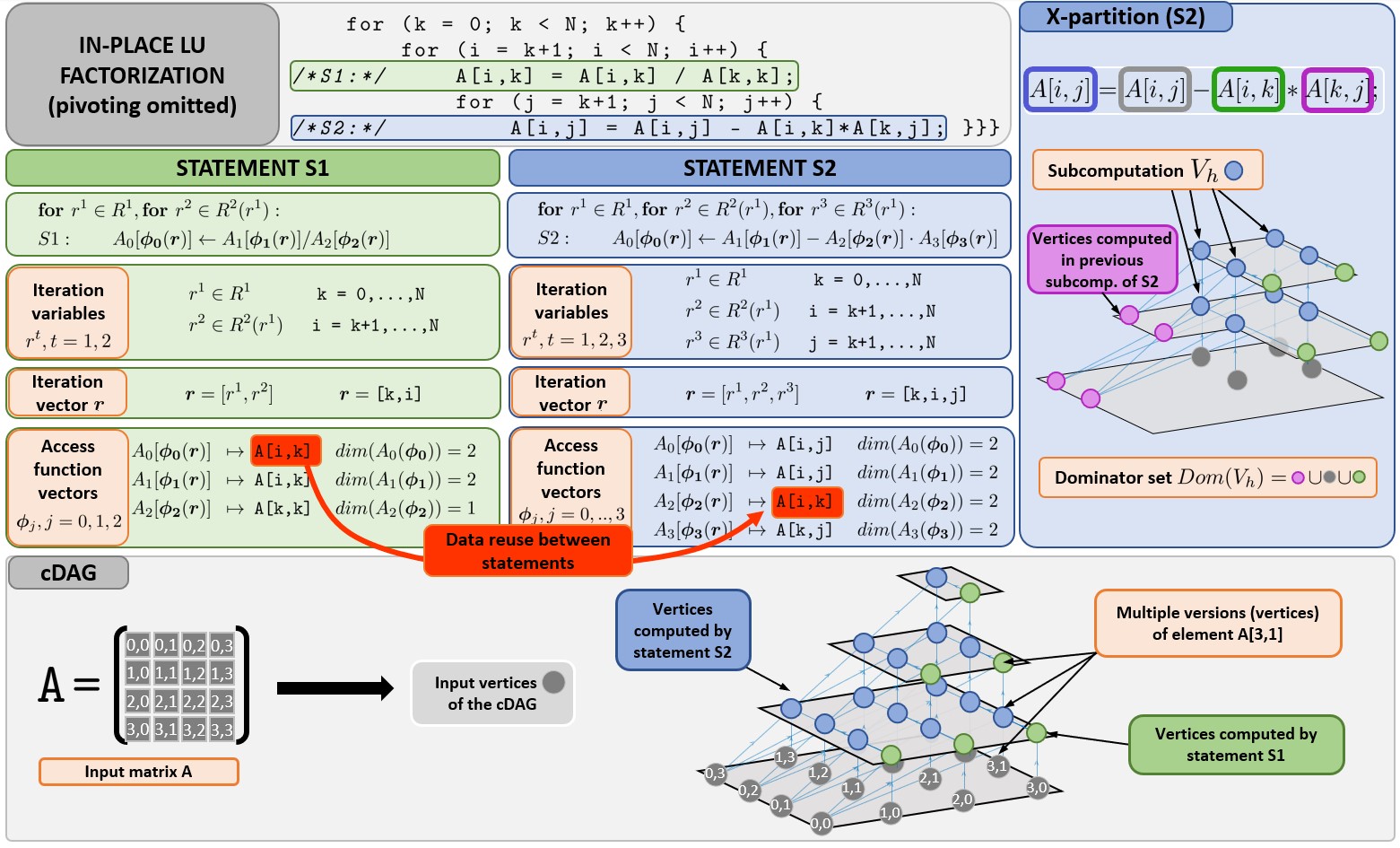}
	\caption{In-place LU factorization (for simplicity, no 
	pivoting is 
		performed). LU contains two statements ($S1$ and 
		$S2$), for which we 
		provide key components of our program representation, 
		together with the 
		corresponding cDAG for $N=4$. For 
		statement $S2$, we also provide a graphical 
		visualization of a single 
		subcomputation $V_h$ in its \xpart.}
	\label{fig:prog_rep}
\end{figure*}

We denote an input program of this form as a \textit{Disjoint Array Access 
Program} 
	(DAAP).
In summary, for each innermost loop iteration (and its corresponding 
iteration 
vector $\mathbf{r}$), each statement is an evaluation of some function $f$ 
on 
$m$ 
inputs, 
where every input is an element of array $A_j, j = 1,\dots, 
m$, 
and the result of $f$ is stored to the output array $A_0$ at location 
$\boldsymbol{\phi}_0(\mathbf{r})$. The notation used in this work {is 
summarized in Table{~\ref{tab:symbols}}}, along  
with an example program (LU factorization) in 
Figure~\ref{fig:prog_rep}.
We want to emphasize that even though the evaluation in this paper focuses 
mostly on the I/O minimization of the parallel LU factorization for 
illustrative 
purposes, our universal method can be applied to other kernels, 
like Cholesky
and QR factorizations, or more general tensor contractions.

\vspace{0.5em}
\noindent
\textit{\textbf{Note: Elements and vertices.}
	Consider a program:}
\begin{lstlisting}[numbers=none]
for k = 1:10    for i = k+1:10    for j = k+1:10
A(i,j) = A(i,j) - A(i,k)*A(k,j)
end; end; end;
\end{lstlisting}
\emph{
	Consider the element} \texttt{A(5,3)}. \emph{Even though it is referenced 
	more 
	than once, for example for}
\texttt{k=1;i=5;j=3;} \emph{by access $A_1(\bm{\phi}_1)$}=\texttt{A(i,j)}, 
\emph{and for }\texttt{k=3;i=5;j=4;} \emph{(access
	$A_2(\bm{\phi}_2)$}= \texttt{A(i,k))}, \emph{this element has 
	been updated 
	and 
	has 
	\emph{different} {versions} in these two accesses, corresponding to 
	\emph{different vertices} in the cDAG.}
\emph{Observe however, that if the second loop iterated over 
		range }\texttt{for i 
		= 
		k:10},
\emph{ 
		this would not be a valid DAAP program, as it would invalidate the 
		disjoint 
		access property.
	}

\subsection{I/O Complexity and Pebble Games}
\label{sec:pebblegame}
{We now establish the relationship between DAAP and the red-blue pebble game 
	--- a powerful abstraction for deriving lower bounds and optimal schedules 
	of 
	cDAGs evaluation.}

\subsubsection{cDAG and Red-Blue Pebble Game}

Introduced by Hong and Kung~\cite{redblue}, the red-blue pebble game is played 
on the 
computation directed acyclic graph (cDAG) $G=(V,E)$.
Every vertex $v \in V$ represents a result of a unique 
computation stored in some memory and a 
directed edge  $(u,v) \in E$ represents a data dependency.
Vertices without any incoming (outgoing) edges are called \emph{inputs} 
(\emph{outputs}). 
The vertices that are currently in fast memory are marked by a red pebble 
on  the corresponding vertex of the cDAG. Since the size of fast 
memory is 
limited (we denote this size by the parameter $M$), we can never have more than 
$M$ 
red 
pebbles on the cDAG at any moment.
Analogously, the contents of the slow memory (of unlimited size) is 
represented by an unlimited number of blue pebbles.
To perform a computation, i.e., to evaluate the value 
corresponding to vertex $v$, all direct 
predecessors of 
$v$ must be loaded into fast memory.

\noindent \macb{Rules and goal of the game.} The game proceeds as follows: 
First, 
all input vertices have blue pebbles placed on them, and no red 
pebbles are present in the cDAG. 
 At any time, one of 
the 
following 
\emph{pebbling moves} are allowed: 1) placing a red pebble on a vertex which 
has a blue pebble (load), 2) placing a blue pebble on a vertex 
which 
has a red pebble (store), 3) placing a red pebble on a vertex 
which 
all direct predecessors have red pebbles (compute), 4) removing any pebble 
from a vertex (discard). The goal of a game is to find a sequence of 
pebbling 
moves such that all output vertices have blue pebbles placed on them, and 
the 
number of load and store operations is minimized.
For this, we need definitions of certain sets of vertices that impose
a structure on the cDAG.

\subsubsection{Dominator and Minimum Sets}
\label{sec:redblue} 
For any subset of vertices $V_h 
\subset V$, a \emph{dominator set} 
$\mathit{Dom}(V_h)$ is a set such 
that every path in the cDAG from an input vertex that enters 
$V_h$ must 
contain at least one vertex in 
$\mathit{Dom}(V_h)$. 
They further define the \emph{minimum set} 
$\mathit{Min}(V_h)$ as the 
set of all vertices in $V_h$ that do not have any 
immediate successors in $V_h$.
To avoid the ambiguity of non-uniqueness of dominator set 
size,
we denote a \emph{minimum dominator set} 
$\mathit{Dom}_{min}(V_h)$ to be a dominator set with the 
smallest size.

\noindent
\Intuition{A dominator set abstracts a 
	set of 
	inputs 
	required to execute subcomputation $V_h$ and
	a minimum set a set of outputs 
	of $V_h$. We bound
	computation ``volume''
	(number of vertices in $V_h$) by its communication ``surface'', comprised 
	by 
	its inputs - vertices in
	$\mathit{Dom}_{min}\left(V_h\right)$ 
	and outputs - vertices in 
	$\mathit{Min}(V_h)$.}

\subsubsection{\xparting}
\label{sec:xpart}
Introduced by Kwasniewski et al.~\cite{COSMA}, \linebreak 
\xparting generalizes the 
work by Hong and 
Kung~\cite{redblue}.
An \xpart of a cDAG 
is a 
collection of $s$ mutually disjoint subsets $V_h$ (referred as 
\emph{subcomputations})
$\mathcal{P}(X) = \{V_1, 
\dots, V_s\}$ of $V$ to $s$  with two additional 
properties:
\begin{itemize}[leftmargin=*]
	\item no cyclic dependencies between 
	subcomputations, 
	\item $\forall h$, $\left|{Dom}_{min}\left( 
	V_h \right)\right| \le X$ and  $\left| 
	{Min}\left(V_h\right) \right| \le 
	X$.
\end{itemize}

{For a given cDAG 
	and for any given \mbox{$X > M$}, denote 
	\mbox{$\Pi(X)$} a set of all its valid \mbox{$X$}-partitions, 
	\mbox{$\mathcal{P}(X) \in \Pi(X)$}. }	
Kwasniewski et al. prove that an I/O optimal schedule of $G$, which 
performs $Q$ load and 
store operations, has an associated \xpart $\mathcal{P}_{opt}(X) \in \Pi(X)$ 
with size 
$|\mathcal{P}_{opt}(X)| \le \frac{Q + X - M}{X - M}$ {for any \mbox{$X > 
		M$}} (\cite{COSMA} extended 
version, 
Lemma 2).

\noindent
\Intuition{If a smallest dominator set of 
\mbox{$V_h$} 
contains \mbox{$X$} 
vertices, then at least \mbox{$X-M$} vertices need to loaded. 
 Note that there may exist a valid $X$-partition 
\mbox{$\mathcal{P}_{min}(X) \in \Pi(X)$} such that 
\mbox{$|\mathcal{P}_{min}(X)| < |\mathcal{P}_{opt}(X)|$}. Such 
$X$-partition 
cannot be directly translated to a valid schedule, but may serve as a lower 
bound.
	}

\subsubsection{Deriving lower bounds}

The following lemma bounds the number I/O operations required to pebble a given 
cDAG:

\begin{lma}
	\label{lma:compIntensity}
	(Lemma 4 in \cite{COSMA}, extended version)
	For any constant $X_c$, the number of I/O operations $Q$ required to pebble 
	a 
	cDAG $G=(V,E)$ with $|V| = n$ vertices using $M$ red pebbles is bounded 
	by $Q \ge {n}/{\rho}$, where ${\rho}  = \frac{|V_{max}|}{X - M}$ is the 
	maximal 
	computational intensity, $V_{max} = \argmax_{V_h \in 
		\mathcal{P}(X_c)} |V_h|$ is the largest 
	subcomputation 
	among all valid \mbox{$X_c$-partitions}.
\end{lma}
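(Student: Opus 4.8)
The plan is to combine two facts already available: the structural bound on the size of the optimal schedule's $X$-partition (the cited Lemma~2 of \cite{COSMA}), and the observation that $|V_{max}|$ uniformly caps the size of \emph{every} subcomputation in \emph{any} valid partition. Since any schedule performs at least as many I/O operations as the optimal one, it suffices to lower-bound $Q$ for the I/O-optimal schedule. By Lemma~2 this schedule has an associated valid $X$-partition $\mathcal{P}_{opt}(X) = \{V_1,\dots,V_s\}$ whose size $s = |\mathcal{P}_{opt}(X)|$ satisfies $s \le \frac{Q + X - M}{X - M}$.

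First I would lower-bound $s$ by a vertex-counting argument. Because the subcomputations are mutually disjoint and together partition $V$, we have $\sum_{h=1}^{s} |V_h| = n$. Because each $V_h$ belongs to a valid $X_c$-partition (taking $X = X_c$), the definition of $V_{max}$ as the largest subcomputation over all valid $X_c$-partitions gives $|V_h| \le |V_{max}|$ for every $h$. Combining, $n = \sum_{h=1}^{s}|V_h| \le s\,|V_{max}|$, hence $s \ge n/|V_{max}|$.

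Next I would chain this with the Lemma~2 bound and substitute the definition $\rho = |V_{max}|/(X-M)$:
\[
\frac{n}{|V_{max}|} \;\le\; s \;\le\; \frac{Q + X - M}{X - M}.
\]
Multiplying through by $(X-M)$ and rearranging then delivers $Q \ge n/\rho$ up to a lower-order additive term, as discussed next.

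The main subtlety is exactly that additive term. The inequality $s \le \frac{Q+X-M}{X-M}$ is equivalent to $Q \ge (s-1)(X-M)$, so the strict rearrangement yields $Q \ge n/\rho - (X-M)$ rather than $Q \ge n/\rho$ on the nose. Since $X = X_c$ and $M$ are constants while $n$ grows, the term $(X-M)$ is of lower order than the leading $n/\rho$, so the stated bound holds to leading order; I would flag this explicitly. The one other point requiring care is verifying that the partition associated with the optimal schedule is itself a valid $X_c$-partition, since that is precisely what licenses the uniform cap $|V_h| \le |V_{max}|$ used in the counting step.
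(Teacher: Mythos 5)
Your derivation is correct and follows exactly the intended route: the paper gives no proof of its own here (the lemma is imported from the cited source), and chaining the quoted partition-size bound $|\mathcal{P}_{opt}(X_c)| \le \frac{Q + X_c - M}{X_c - M}$ with the counting bound $s \ge n/|V_{max}|$ is precisely how that source obtains it. The additive $(X_c-M)$ slack you flag is real if one uses only the quoted inequality, but since $X_c$ is a constant independent of $n$ (in this paper always $X_0 = \Theta(M)$), it is a lower-order term that is absorbed in every application of the lemma, so your proof establishes the bound in the form actually used.
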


\noindent
\macb{Limitations of existing methods. }{
	While pebbling-based approaches have been successfully 
	applied 
	to 
	algorithms like FFT{~\cite{redblue}}, sorting{~\cite{redbluewhite}}, and 
	parallel MMM{~\cite{COSMA}}, they still pose several limitations: 
}
\begin{itemize}[leftmargin=*]
	\item {\textbf{Parametric cDAGs.} 
		Existing 
		methods operate on cDAGs where vertices and edges are 
		explicitly provided. To handle cDAGs of parametric sizes (e.g., 
		\mbox{$N^3$} 
		vertices of MMM or \mbox{$N \log N$} vertices of FFT), additional, 
		non-generalizable methods must be further applied.}
	\item {\textbf{Complexity.} Finding an optimal pebbling sequence is 
	P-SPACE 
		complete{~\cite{redblueHard_}}; and \mbox{$S$}- or, more general, 
		\mbox{$X$}-partitioning 
		is NP-hard (reducible to max-cut). }
	\item {\textbf{Lower bounds vs. schedule.} There 
		is no general, direct method to translate lower bounds derived from 
		\mbox{$X$}-partitioning to a correct schedule. }
\end{itemize}

{
	In the following section, we take advantage of a DAAP structure 
	(Section{~\ref{sec:inputPrograms}}) to build up a new, general method for
	obtaining I/O lower bounds.
	This allows capturing \emph{parametric cDAGs}, as all vertex sets are 
	symbolic. It drastically reduces the \emph{complexity}, as individual 
	vertices do not need to be modeled anymore. 
}

\section{General I/O Lower Bounds}
\label{sec:boundsSingleStatement}
In this section, we derive I/O 
bounds for a single statement.
In Section~\ref{sec:mult_statements} we extend our 
analysis to capture interactions and reuse between 
different statements in the program.

In this paper, we present the key lemmas and the intuition 
		behind them to guide the reader to our main result --- near optimal 
		parallel LU 
		factorization. However, the method covers a much wider spectrum of 
		algorithms. 
		For curious readers, we present all proofs of provided lemmas, together 
		with the full theoretical analysis, in the
		attached supplementary material.

We start with stating our key lemma:

\begin{restatable}{lma}{compIntensityPhi}	
	\label{lma:compIntensityPhi}
	If $|V_{max}|$ can be expressed as a closed-form function of $X$, that 
	is 
	$|V_{max}| = 
	\psi(X)$, 
	then the lower bound on $Q$ may be expressed as:
	$$Q \ge n \frac{(X_0 - M)}{\psi(X_0)}$$
	where $X_0 = \argmin_X \rho = \argmin_X \frac{\psi(X)}{X-M}$.
\end{restatable}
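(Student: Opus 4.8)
The plan is to read the computational-intensity bound of Lemma~\ref{lma:compIntensity} as a \emph{one-parameter family} of valid lower bounds and then pick the parameter that makes the bound tightest. Concretely, that lemma asserts $Q \ge n/\rho(X_c)$ for \emph{every} admissible constant $X_c > M$, where $\rho(X_c) = |V_{max}(X_c)|/(X_c - M)$. Substituting the hypothesis $|V_{max}| = \psi(X)$ turns each choice of $X_c$ into the inequality
\[
Q \ge n\,\frac{X_c - M}{\psi(X_c)}.
\]
Since all of these hold simultaneously and each bounds the same scalar $Q$ from below, $Q$ is at least the supremum of the right-hand side over all admissible $X_c$; the whole content of the lemma is to evaluate that supremum in closed form.

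First I would note that, because $n>0$ is a fixed positive constant, maximizing $n(X-M)/\psi(X)$ over $X$ is equivalent to minimizing $\rho(X) = \psi(X)/(X-M)$. Hence the tightest bound in the family is attained at $X_0 = \argmin_X \rho(X)$, and substituting $X = X_0$ gives precisely $Q \ge n(X_0 - M)/\psi(X_0)$, as claimed. To exhibit $X_0$ I would use the first-order condition $\rho'(X_0) = 0$, equivalently $\psi'(X_0)(X_0 - M) = \psi(X_0)$; for the superlinear $\psi$ that arises in the linear-algebra kernels of interest, $\rho(X) \to \infty$ both as $X \to M^{+}$ and as $X \to \infty$, so a finite interior minimizer exists and the usual second-order check confirms it is a genuine minimum rather than a saddle or a maximum.

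The step I expect to be delicate is not the optimization but the justification for letting $X$ range over a continuum, since $X_c$ nominally indexes the (integer-valued) dominator-set threshold of an $X_c$-partition. I would resolve this through two observations. First, because enlarging $|V_{max}|$ only \emph{weakens} the bound $Q \ge n(X-M)/|V_{max}|$, it suffices that the closed form satisfy $\psi(X) \ge |V_{max}(X)|$ throughout the range; so $\psi$ may be taken as a monotone, smooth over-estimate of the true largest subcomputation and every member of the family stays valid. Second, I must confirm that Lemma~\ref{lma:compIntensity} itself persists at non-integer $X_c$: the constraint $|\mathit{Dom}_{min}(V_h)| \le X_c$ coincides with $\le \lfloor X_c \rfloor$, and the bound $|\mathcal{P}_{opt}(X)| \le (Q+X-M)/(X-M)$ used to derive that lemma is already stated for any real $X > M$, so the family is genuinely indexed by a continuum and the continuous $\argmin$ returns an admissible bound. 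Pinning down these monotonicity and extension requirements on $\psi$ is where the real care lies; the algebra of substituting $X_0$ is immediate once they are in place.
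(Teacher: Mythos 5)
Your proposal is correct and follows essentially the same route as the paper: Lemma~\ref{lma:compIntensity} gives a valid lower bound for every admissible $X_c$, so choosing $X_0 = \argmin_X \rho(X)$ maximizes the bound, and substituting $\rho = \psi(X_0)/(X_0 - M)$ yields the claim. Your additional remarks on over-estimating $\psi$ and on extending the family to non-integer $X$ are sensible elaborations the paper leaves implicit, but they do not change the argument.
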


\begin{proof}
	Note that Lemma~\ref{lma:compIntensity} is valid for any $X_c$
	(i.e.,  
	for any  
	$X_c$, it gives a valid lower bound). Yet, these bounds are not necessarily 
	tight. 
	As we want to 
	find tight I/O lower bounds, we need to maximize the lower 
	bound. $X_0$ by 
	definition minimizes $\rho$; thus, it maximizes the bound.
	Lemma~\ref{lma:compIntensityPhi} then follows directly from 
	Lemma~\ref{lma:compIntensity} 
	by 
	substituting $\rho = \frac{\psi(X_0)}{X_0 - M}$.
\end{proof}

\noindent
\Intuition{$\psi(X)$ expresses computation ``volume'', while $X$ is its input 
	``surface''. $X_0$ corresponds to the 
	situation 
	where the ratio of this ``volume'' to the required communication is 
	minimized 
	(corresponding to a highest lower bound).}

\Note{}
If function $\psi(X)$ is differentiable and has a global 
minimum, we 
can 
find 
$X_0$ by, e.g., solving the equation $\frac{d\frac{\psi(X)}{X 
		- M}}{dX} = 0$.
The key limitation is that it is not always possible to find $\psi$, that 
is, 
to 
express $|V_{max}|$ solely as a function of $X$. However, for many linear 
algebra kernels $\psi(X)$ exists. Furthermore, one can relax this problem 
preserving the correctness of the lower bound, that is, by finding a 
function
$\hat{\psi}: \forall_X \hat{\psi}(X) \ge \psi(X)$.

\subsection{Iteration vector, domain, and access sizes}
\label{sec:access_sizes}

Each execution of statement $S$ is associated with the 
\emph{iteration vector} 
\varvec{r} = [\var{$r^1$}$, \dots, $\var{$r^l$}] $\in \mathbb{N}^l$ 
representing 
the current 
iteration, 
that is, values of iteration variables \var{$r^1$}$, \dots, $\var{$r^l$}. 
Each 
subcomputation $V_h$ is uniquely defined by all iteration 
vectors associated 
with vertices pebbled 
in $V_h$:
$\{$\varvec{$\mathbf{r}_h^1$}$, \dots, 
$\varvec{$\mathbf{r}_h^{|V_h|}$}$\} = 
\mathbf{R}_h$.
For each iteration variable \var{$r^t$}, $t = 1, \dots, l$, denote the 
set 
of all values that 
\var{$r^t$} 
takes during $V_h$ as $R^t_h$.
We have \var{$r_h^t$} $\in R^t_h \subseteq$ \range{$R^t$} 
$\subset \mathbb{N}$.
We denote $\mathbf{R}_h \subseteq [R^1_h, \dots, R^t_h] 
\subseteq $ 
\vecrange{R} as 
the \emph{iteration 
	domain} of subcomputation $V_h$.

Furthermore, recall that each input access $A_j[\bm{\phi_j}(\bm{r})]$ is 
uniquely defined by $dim(\bm{\phi}_j)$ iteration variables $r_j^1, \dots, 
r_j^{dim(\bm{\phi}_j)}$. Denote the set of all values each of $r_j^k$ takes 
during 
$V_h$ as $R^k_{h,j}$.
Given $\mathbf{R}_h$, we also denote the number of 
different vertices that are 
accessed from each input array \arr{$A_j$} as 
$|A_j(\mathbf{R}_h)|$.

We now state the lemma which bounds $|V_h|$ by the iteration sets' sizes 
$|R^t_{h}|$:

\begin{restatable}{lma}{rectTiling}
	\label{lma:rectTiling}
	{Given the ranges of all iteration variables \mbox{$R_h^t,$} \mbox{$t = 
			1,\dots,l$} 
		during 
		subcomputation 
		\mbox{$V_h$}, if \mbox{$|V_h| = \prod_{t=1}^{l}|R^t_{h}|$}, then 
		\linebreak
		\mbox{$\forall j = 1, \dots, 
			m :|A_j(\mathbf{R}_h)| = $} \mbox{$
			\prod_{k=1}^{dim(\bm{\phi}_j)}|R^k_{h,j}|$} and \mbox{$|V_h|$} is 
			maximized 
		among 
		all valid subcomputations which iterate over \mbox{$\bm{R}_h = [R_h^1, 
			\dots, R_h^t]$}.}
\end{restatable}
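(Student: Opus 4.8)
The plan is to show that the full-box hypothesis $|V_h| = \prod_{t=1}^{l}|R_h^t|$ forces the iteration domain $\mathbf{R}_h$ to be exactly the Cartesian product $R_h^1 \times \cdots \times R_h^l$, and then to propagate this ``rectangularity'' through each access function $\bm{\phi}_j$ by projection onto its variables. First I would observe that the vertices of $V_h$ are in bijection with the iteration vectors pebbled during $V_h$: each evaluation of $S$ at an iteration vector $\mathbf{r}$ produces a single output vertex $A_0[\bm{\phi}_0(\mathbf{r})]$, and distinct iteration vectors yield distinct vertices (a distinct element, or at least a distinct version of the same element), so $|V_h| = |\mathbf{R}_h|$, consistent with the way $\mathbf{R}_h$ is indexed in the definition.

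Next I would record the elementary containment $\mathbf{R}_h \subseteq R_h^1 \times \cdots \times R_h^l$, which holds by the very definition of $R_h^t$ as the set of values taken by $r^t$ during $V_h$. Hence $|V_h| = |\mathbf{R}_h| \le \prod_{t=1}^{l}|R_h^t|$ for \emph{every} valid subcomputation iterating over the ranges $[R_h^1,\dots,R_h^l]$, which already settles the maximality claim: the bound $\prod_t |R_h^t|$ is attained precisely when $\mathbf{R}_h$ fills the whole product box. In particular the hypothesis $|V_h| = \prod_t|R_h^t|$ is equivalent to the rectangularity statement $\mathbf{R}_h = R_h^1 \times \cdots \times R_h^l$, which is the form I will exploit for the access-size identity.

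For that identity I would fix an input array $A_j$ and let $r^{t_1},\dots,r^{t_d}$, with $d = dim(\bm{\phi}_j)$, be the \emph{distinct} iteration variables appearing in $\bm{\phi}_j$; by definition $R_{h,j}^k = R_h^{t_k}$ for each $k$. The quantity $|A_j(\mathbf{R}_h)|$ is the number of distinct values of $\bm{\phi}_j(\mathbf{r})$ over $\mathbf{r}\in\mathbf{R}_h$, and since $\bm{\phi}_j$ is injective as a function of its $d$ input variables, this equals the number of distinct projected tuples $(r^{t_1},\dots,r^{t_d})$ realized in $\mathbf{R}_h$. Using the rectangularity just established, the projection of the product box $\mathbf{R}_h$ onto the coordinates $\{t_1,\dots,t_d\}$ is the sub-box $\prod_{k=1}^{d} R_{h,j}^k$, which contains exactly $\prod_{k=1}^{d}|R_{h,j}^k|$ tuples; therefore $|A_j(\mathbf{R}_h)| = \prod_{k=1}^{dim(\bm{\phi}_j)}|R_{h,j}^k|$, as required.

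The step I expect to require the most care is the injectivity reduction: I must invoke the \emph{disjoint access property} and use injectivity of $\bm{\phi}_j$ in exactly its $dim(\bm{\phi}_j)$ distinct input variables rather than in all $dim(A_j)$ nominal arguments, so that repeated-index accesses such as $A_j[k,k]$ are counted correctly (one variable, range $R_{h,j}^1$, not two). Once that is pinned down, the remaining ingredient is the routine fact that projecting a nonempty Cartesian product onto a subset of its factors returns the product of those factors; everything else is bookkeeping on the iteration sets $R_h^t$ and their identification with the access ranges $R_{h,j}^k$.
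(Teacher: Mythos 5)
Your proposal is correct and follows essentially the same route as the paper: the combinatorial upper bounds $|V_h| \le \prod_t |R_h^t|$ and $|A_j(\mathbf{R}_h)| \le \prod_k |R_{h,j}^k|$, the observation that tightness of the first forces the iteration domain to be the full Cartesian product, and the conclusion that projecting that product onto the variables of each injective $\bm{\phi}_j$ makes the access-size bounds tight as well. Your explicit treatment of repeated-index accesses such as $A_j[k,k]$ via $dim(\bm{\phi}_j)$ is exactly the injectivity point the paper also invokes, just spelled out more carefully.
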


To prove it, we now introduce two auxiliary lemmas:
\begin{lma}
	\label{lma:vi_bound}
	For statement $S$, the size $|V_h|$ of 
	subcomputation $V_h$ (number 
	of 
	vertices of $S$ computed during 
	$V_h$) is bounded by the sizes of the iteration 
	variables' 
	sets $R_h^t, t = 
	1, \dots, l$:
	\begin{equation}
	\label{eq:vmax_vol}
	|V_h| \le \prod_{t=1}^{l}|R^t_h|.
	\end{equation}
\end{lma}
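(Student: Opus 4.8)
The plan is to reduce the bound to the elementary fact that any finite set of integer points lies inside the Cartesian product of its coordinate-wise projections. The two ingredients are (i) identifying $|V_h|$ with the number of distinct iteration vectors active during $V_h$, and (ii) the projection containment.

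First I would note that $|V_h| = |\mathbf{R}_h|$. Each vertex pebbled in $V_h$ is the output of a single evaluation of $S$, which occurs at a single iteration vector $\mathbf{r}$; conversely, distinct iterations of the loop nest produce distinct (freshly versioned) output vertices, so no two iteration vectors collapse onto the same vertex. Thus the correspondence $\mathbf{r} \leftrightarrow$ (vertex computed at $\mathbf{r}$) is a bijection, and by the very definition $\mathbf{R}_h = \{\mathbf{r}_h^1,\dots,\mathbf{r}_h^{|V_h|}\}$ we obtain $|\mathbf{R}_h| = |V_h|$.

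Next I would establish the containment $\mathbf{R}_h \subseteq R_h^1 \times \cdots \times R_h^l$. By definition $R_h^t$ is exactly the projection of $\mathbf{R}_h$ onto coordinate $t$, i.e.\ the set of all values that $r^t$ assumes across $V_h$. Hence every $\mathbf{r} = [r^1,\dots,r^l] \in \mathbf{R}_h$ satisfies $r^t \in R_h^t$ for all $t$, which is precisely membership in the product set. Taking cardinalities and using $|R_h^1 \times \cdots \times R_h^l| = \prod_{t=1}^{l} |R_h^t|$ gives $|V_h| = |\mathbf{R}_h| \le \prod_{t=1}^{l} |R_h^t|$, as claimed.

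There is no serious obstacle here; the statement is essentially the observation that the volume of a box dominates the count of any point set it contains. The only point to handle with care is step (i): one must invoke the versioning semantics of the cDAG so that repeated updates of the same array element (same location, different iterations) are counted as separate vertices rather than identified. This also makes the equality case transparent --- $|V_h| = \prod_{t=1}^{l} |R_h^t|$ holds precisely when $\mathbf{R}_h$ fills the entire product box, which is exactly the hypothesis that drives Lemma~\ref{lma:rectTiling}.
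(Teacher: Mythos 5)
Your proposal is correct and follows essentially the same combinatorial argument as the paper's proof: each vertex computed in $V_h$ is uniquely identified by its iteration vector, and the number of iteration vectors is bounded by the product of the coordinate-wise ranges. Your version is somewhat more explicit about the bijection (via the versioning semantics) and the containment $\mathbf{R}_h \subseteq R_h^1 \times \cdots \times R_h^l$, but the underlying idea is identical.
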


\begin{proof}
	Inequality~\ref{eq:vmax_vol} follows from a combinatorial argument: 
	each 
	computation in $V_h$ is uniquely defined by its iteration vector $[r^1, 
	\dots, 
	r^l]$. As each iteration variable $r^t$ 
	takes $|R_h^t|$ 
	different values during $V_h$, we have $|R_h^1| \cdot |R_h^2| 
	\cdot 
	\dots \cdot |R_h^t| 
	= \prod_{t=1}^{l}|R^t_{h}|$ ways how to uniquely choose the iteration 
	vector in 
	$V_h$. 
\end{proof}

Now, given $\mathbf{R}_h$, we want to assess how many 
different vertices are 
accessed for each input array \arr{$A_j$}.
Recall that this number is denoted as access size 
$|A_j(\mathbf{R}_h)|$.

We will apply the same combinatorial reasoning to 
$A_j(\mathbf{R}_h)$. For each access 
$A_j[\bm{\phi}_j(\bm{r})]$, each 
one of $r_j^k$, $k 
= 1, 
\dots, dim(\bm{\phi}_j)$ iteration variables loops over set 
$R_{h,j}^k$ during subcomputation $V_h$.
We can thus bound size of $A_j(\mathbf{R}_h)$ 
similarly to 
Lemma~\ref{lma:vi_bound}:

\begin{lma}
	\label{lma:projection_bound}
	The access size  $|A_j(\mathbf{R}_h)|$ of subcomputation 
	$V_h$ 
	(the number of vertices 
	from the array $A_j$ required to compute $V_h$) is 
	bounded by the 
	sizes of $dim(\bm{\phi}_j)$
	iteration variables' sets $R_{h,j}^k, k = 
	1, \dots, dim(\bm{\phi}_j)$:
	
	\begin{equation}
	\label{eq:a_j_vol}
	\forall_{j = 1, \dots, m} : |A_j(\mathbf{R}_h)| \le 
	\prod_{k=1}^{dim(\bm{\phi}_j)}|R^k_{h,j}|
	\end{equation}
	
	where \range{$R^k_{h,j}$} $\ni$ \var{$r^k_j$} is the set over which 
	iteration 
	variable \var{$r^k_j$}
	iterates 
	during $V_h$.
\end{lma}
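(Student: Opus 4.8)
The plan is to mirror the combinatorial counting argument already used for Lemma~\ref{lma:vi_bound}, but to apply it to the access function $\bm{\phi}_j$ rather than to the full iteration vector. The essential observation is that, by the definition of the access dimension, the referenced element $A_j[\bm{\phi}_j(\bm{r})]$ depends only on the $dim(\bm{\phi}_j)$ iteration variables $r_j^1, \dots, r_j^{dim(\bm{\phi}_j)}$ that actually occur in $\bm{\phi}_j$, and not on the remaining $l - dim(\bm{\phi}_j)$ loop variables. Hence any two iteration vectors that agree on these $dim(\bm{\phi}_j)$ coordinates necessarily reference the same element of $A_j$, so the distinct vertices accessed from $A_j$ during $V_h$ are exactly the image of $\bm{\phi}_j$ restricted to the tuples $(r_j^1, \dots, r_j^{dim(\bm{\phi}_j)})$ that occur in $V_h$.

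First I would fix the array $A_j$ and enumerate precisely which iteration variables appear in $\bm{\phi}_j$, calling them $r_j^1, \dots, r_j^{dim(\bm{\phi}_j)}$, recalling that during $V_h$ each such variable $r_j^k$ ranges over the set $R_{h,j}^k$. Next, by the same product-counting argument as in Lemma~\ref{lma:vi_bound}, the number of distinct tuples $(r_j^1, \dots, r_j^{dim(\bm{\phi}_j)})$ arising in $V_h$ is at most $\prod_{k=1}^{dim(\bm{\phi}_j)}|R_{h,j}^k|$, since each such tuple lives in the product set $R_{h,j}^1 \times \cdots \times R_{h,j}^{dim(\bm{\phi}_j)}$. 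Finally, because $\bm{\phi}_j$ is a genuine function, each tuple maps to at most one element of $A_j$, so the number of distinct accessed vertices cannot exceed the number of distinct tuples, yielding $|A_j(\mathbf{R}_h)| \le \prod_{k=1}^{dim(\bm{\phi}_j)}|R_{h,j}^k|$.

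The main subtlety---and the point distinguishing this bound from a naive reuse of Lemma~\ref{lma:vi_bound}---is getting the product to range over only $dim(\bm{\phi}_j)$ factors rather than all $l$ loop variables. This requires carefully invoking the access-dimension definition so that loop variables absent from $\bm{\phi}_j$ contribute no additional distinct accesses; collapsing those ``free'' coordinates is precisely what sharpens the bound. I note that injectivity of $\bm{\phi}_j$ (the disjoint access property) is \emph{not} needed for this upper bound, as a function cannot have a larger image than its domain; injectivity would matter only for a matching lower bound on $|A_j(\mathbf{R}_h)|$.
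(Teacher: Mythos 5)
Your proposal is correct and follows essentially the same route as the paper: both apply the product-counting argument of Lemma~\ref{lma:vi_bound} to the $dim(\bm{\phi}_j)$ iteration variables appearing in $\bm{\phi}_j$, bounding the distinct accessed vertices by the number of distinct tuples in $R_{h,j}^1 \times \cdots \times R_{h,j}^{dim(\bm{\phi}_j)}$. Your closing remark that only functionality (not injectivity) of $\bm{\phi}_j$ is needed for this upper bound is a correct and slightly sharper observation than the paper's phrasing, but it does not change the argument.
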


\begin{proof}
	We use the same 
	combinatorial argument as in Lemma~\ref{lma:vi_bound}. Each vertex in 
	$A_j(\mathbf{R}_h)$ is uniquely defined by $[r_j^1, 
	\dots, 
	r_j^{dim(\bm{\phi}_j)}]$. Knowing the number of different values each 
	$r_j^k$ 
	takes, 
	we bound the number of different access vectors 
	$\bm{\phi}_j(\bm{r}_h)$.
\end{proof}

\noindent \macb{Example}:	
\emph{Consider once more statement $S1$ from LU factorization in 
	Figure~\ref{fig:prog_rep}.
	We have \phii{$\boldsymbol{\phi}_0$} = \phii{[i, k]}, 
	\phii{$\boldsymbol{\phi}_1$} = \phii{[i, k]}, and 
	\phii{$\boldsymbol{\phi}_2$} 
	= \phii{[k, k]}. 
	Denote the iteration subdomain for subcomputation 
	$V_h$ as $\mathbf{R}_h = $\linebreak
	$\{$\varvec{$[k^1, i^1]$}, $\dots$, 
	\varvec{$[k^{|V_h|}, 
		i^{|V_h|}]$} $\}$,
	where each variable \var{$k$} and \var{$i$} iterates over its 
	set
	\var{$k^g$}
	$\in \{r_{k,1}, \dots, r_{k,K}\} =$ 
	\range{$R_h^k$} and \var{$i^g$} 
	$\in \{r_{i,1}, \dots, r_{i,I}\} = $ 
	\range{$R_h^i$}, for $g 
	= 
	1, \dots, |V_h|$. Denote the sizes of these 
	sets as $|R_h^k| 
	= K_h$ and $|R_h^i| = I_h$, that is, during $V_h$, 
	variable 
	\var{$k$} takes $K$ different values and \var{$i$} takes $I_h$ 
	different values. For 
	\phii{$\boldsymbol{\phi}_1$}, both iteration 
	variables 
	used are different: \var{k} and \var{i}.
	Therefore, we have (Equation~\ref{eq:a_j_vol}) 
	$|A_1(\mathbf{R}_h)| \le K_h \cdot I_h$. On the other 
	hand, for 
	$\boldsymbol{\phi}_2$, the 
	iteration variable \var{$k$} is used twice. 
	Recall that the access dimension is the minimum number of different 
	iteration 
	variables that uniquely address it 
	(Section~\ref{sec:inputPrograms}), so its dimension is 
	$dim($\arr{$A_2$}$) = 1$ and 
	the only iteration variable needed to uniquely determine 
	$\bm{\phi}_2$ is \var{$k$}.  Therefore,
	$|A_2(\mathbf{R}_h)| \le K_h$.}

\macb{Dominator set.}
Input vertices $A_1, \dots, A_m$ form a dominator 
set of vertices $A_0$,
because any path from graph inputs to any vertex in $A_0$ 
must include \emph{at least} one 
vertex from $A_1, \dots, A_m$.
This is also the 
\emph{minimum} dominator set, because of the disjoint access property
(Section~\ref{sec:inputPrograms}): 
any path from graph inputs to any vertex in $A_0$ can 
include \emph{at most} one 
vertex from  $A_1, \dots, A_m$.

\vspace{1em}
\noindent
\emph{Proof of Lemma~\ref{lma:rectTiling}.}
For subcomputation $V_h$, we have 
$|\bigcup_{j=1}^{m}A_j(\mathbf{R}_h)| \le X$ (by the 
definition 
of an \xpart). Again, by the disjoint access property, we have 
$\forall j_1 \ne j_2: A_{j_1}(\bm{R}_h) \cap A_{j_2}(\bm{R}_h) = \emptyset$.
Therefore, we also have 
$|\bigcup_{j=1}^{m}A_j(\mathbf{R}_h)| = 
\sum_{j=1}^{m}|A_j(\mathbf{R}_h)|$.
We now want to maximize $|V_h|$, that is to find 
$V_{max}$ to 
obtain computational intensity $\rho$
(Lemma~\ref{lma:compIntensityPhi}). 

Now we prove that to maximize $|V_h|$, 
inequalities~\ref{eq:vmax_vol} 
and~\ref{eq:a_j_vol} must be tight (become equalities). 

From proof of Lemma~\ref{lma:vi_bound} it follows that 
$|V_h|$ is maximized 
when iteration vector $\bm{r}$ takes all possible 
combinations of 
iteration variables $r^t_h \in R_h^t$  during $V_h$. But, as we visit each
combination of all $l$ iteration variables, for each access $A_j$
every combination of its $[r_j^1, \dots, r_j^{dim(\bm{\phi}_j)}]$ iteration 
variables is also visited.  
Therefore, for every $j = 1, \dots, m$, each access size 
$|A_j(\bm{R}_h)|$ is maximized  
(Lemma~\ref{lma:projection_bound}), 
as access functions are injective, which implies that for each combination 
of  $[r_j^1, 
\dots, r_j^{dim(\bm{\phi}_j)}]$, there is one access to $A_j$.
{\mbox{$\prod_{t=1}^{l}|R_h^t|$} is then the upper bound on 
	\mbox{$|V_h|$}, and its 
	tightness implies that all bounds on access sizes 
	\mbox{$|A_j(\bm{R}_h)| \le 
		\prod_{k=1}^{dim(\bm{\phi}_j)}|R^k_{h,j}|$} are also tight.
	\qed

\noindent
\Intuition{Lemma~\ref{lma:rectTiling} states that if each iteration variable 
	$r^t, t{=}1, \dots, l$ takes $|R_h^t|$ different values, then there are at 
	most 
	$\prod_{t=1}^{l}|R^t_{h}|$ different iteration vectors $\bm{r}$ which can 
	be 
	formed in $V_h$. Therefore, to maximize $|V_h|$, all combinations of 
	values $r^t$ should be evaluated. On the other hand, this also implies 
	maximization of all access sizes $|A_j(\mathbf{R}_h)| = 
	\prod_{k=1}^{dim(\bm{\phi}_j)}|R^k_{h,j}|$.} 

\subsection{Finding the I/O Lower Bound}
\label{sec:iobound_singlestatement}
Denoting  $V_{max} = \argmax_{V_h \in 
	\mathcal{P}(X)} |V_h|$ the largest subcomputation 
among all valid \mbox{$X$-partitions}, we use 
Lemma~\ref{lma:rectTiling} and combine it with the 
dominator set 
constraint. Note that all access set sizes are strictly positive integers 
$|R^t_{max}| \in \mathbb{N}_+, t = 1, \dots, l$. Otherwise, no computation can 
be performed. However, 
as we 
only want to find the bound on number of I/O operations, we relax 
the 
integer constraints and replace them with $|R^t_{max}| \ge 1$. Then,
we formulate finding $\psi(X)$ (Lemma~\ref{lma:compIntensityPhi}), as the 
optimization problem:

{
\begin{align}
\label{eq:findingX}
\nonumber
\max \prod_{t=1}^{l}|R^t_{max}| & \hspace{2em}\text{s.t.}\\
\nonumber
\sum_{j = 1}^{m} \prod_{k=1}^{dim(\bm{\phi}_j)}|R^k_{max,j}|& \le X \\
\forall 1 \ge t \ge l : |R^t_{max}|& \ge 1 
\end{align} 
}

We then find $|V_{max}| = \psi(X)$ as a function of $X$ using 
Karush– Kuhn–Tucker conditions~\cite{kkt}. Next, we 
solve 
\begin{equation}
\label{eq:drhodx}
\frac{d\frac{\psi(X)}{X - M}}{dX} = 0.
\end{equation}
Denoting $X_0$ as solution to Equation~\ref{eq:drhodx}, we finally obtain
\vspace{-0.5em}
\begin{equation}
\label{eq:Qlowerbound}
Q \ge |V| \frac{(X_0 - M)}{\psi(X_0)}.
\end{equation}

\subsection{Out-degree-one Vertices}
In some cDAGs, every non-input vertex
has 
a certain number $u \ge 0$ of direct predecessors, which are input vertices 
with 
out-degree 1. We can use it to put an additional bound on the computational 
intensity.

\begin{restatable}{lma}{roOnecase}
	\label{lma:rho1case}
	If in a cDAG $G=(V,E)$ every non-input vertex has at least $u$ direct 
	predecessors, with out-degree one, which are graph inputs, then the maximum 
	computational intensity $\rho$ of this cDAG is bounded by $\rho \le 
	\frac{1}{u}$.
\end{restatable}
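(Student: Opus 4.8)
The plan is to bound the computational intensity directly through the pebble game, exploiting that the special predecessors are single-use rather than going through the dominator set. Fix any valid $X$-partition $\mathcal{P}(X)$ and any subcomputation $V_h$ in it. For each non-input vertex $v \in V_h$, write $w_1(v), \dots, w_u(v)$ for $u$ of its direct predecessors that are graph inputs of out-degree one. First I would establish the combinatorial heart of the argument: these inputs are pairwise distinct across all of $V_h$. Indeed, an input $w$ of out-degree one has exactly one successor, so it can appear as a predecessor of at most one vertex; hence the set $I_h = \{ w_k(v) : v \in V_h,\ k = 1,\dots,u \}$ satisfies $|I_h| \ge u\,|V_h|$.

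Next I would argue that every vertex of $I_h$ forces an I/O operation that cannot be amortized. Each $w \in I_h$ is a graph input, so it carries only a blue pebble initially and can never be produced by a compute move; the only way to place a red pebble on it --- required before its unique successor is computed --- is a load. Because $w$ has out-degree one, this red pebble is useful for exactly one compute move and can be discarded immediately afterwards, so the load cannot be shared with the computation of any other vertex. Summing over the whole cDAG, the loads charged to distinct inputs in $I := \bigcup_h I_h$ are all distinct, giving $Q \ge |I| \ge u\,|V|$, i.e.\ at least $u$ loads per computed vertex.

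To phrase this as a bound on $\rho$, I would reuse the windowing behind Lemma~\ref{lma:compIntensity}: a subcomputation corresponds to the vertices produced within a budget of $X-M$ loads on top of at most $M$ resident pebbles. Since each such vertex consumes $u$ private, non-reusable input loads, the number of vertices a subcomputation can produce is controlled by the loads it performs rather than by its dominator size, so the vertices-per-I/O ratio of every subcomputation is at most $1/u$. Taking the maximum over $h$ yields $\rho = \max_h \rho_h \le 1/u$, and substituting into Lemma~\ref{lma:compIntensity} recovers $Q \ge n/\rho \ge u\,n$.

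The step I expect to be the real obstacle is this last reconciliation: in the raw windowing estimate each subcomputation also has access to up to $M$ inputs resident at its start, which would only give $|V_h| \le X/u$ and hence the slightly weaker $\rho \le \tfrac{X}{u(X-M)}$. The clean bound $\rho \le 1/u$ relies on the single-use property to avoid double-counting --- a resident special input was paid for by a load in an earlier window and is consumed exactly once --- so the charging must be done globally, assigning each of the $\ge u\,|V|$ loads to a unique computed vertex rather than window-by-window. Making this accounting airtight (in particular that no special input is counted for two different vertices, and that the additive $M$ terms are genuinely lower order) is where the care is needed. The dominator-set route is of no help here, since a vertex $v$ can by itself dominate the length-one paths from all of its own out-degree-one predecessors, making $|\mathit{Dom}_{min}(V_h)|$ far smaller than $u\,|V_h|$.
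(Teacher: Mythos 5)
Your proposal is correct and takes essentially the same route as the paper: the paper's proof is exactly the observation that each computed vertex forces $u$ loads of out-degree-one inputs that cannot be shared with any other computation, hence at most $1/u$ vertices per I/O operation. The global-charging subtlety you raise (that window-by-window accounting with $M$ resident pebbles would only give $\rho \le \tfrac{X}{u(X-M)}$, and that the dominator set need not contain these inputs) is real, but the paper simply glosses over it and asserts the per-vertex load count directly --- your version is a more careful writeup of the same argument.
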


\begin{proof}
	By the definition of the red-blue pebble game, all inputs start in slow 
	memory, 
	and therefore, have to be loaded. By the assumption on the cDAG, to compute 
	any 
	non-input vertex $v \in V$, at least $u$ input vertices need to have red 
	pebbles placed on them using a load operation.
	Because these vertices do not have any other direct successors (their 
	out-degree is 1), they cannot be used to compute any other non-input vertex 
	$w$. Therefore, each computation of a non-input vertex requires at least 
	$u$ 
	unique input vertices to be loaded. 
\end{proof}

\emph{Example:} Consider Figure~\ref{fig:cdagsRho1}. In~a), each
compute vertex $C[i,j]$ has two input vertices: $A[i,j]$ with out-degree 1, and
$b[j]$ with out-degree $n$, thus $u=1$. As both 
array $A$ and vector $b$ start in the slow memory (having blue pebbles on 
each 
vertex), for each computed vertex from $C$, at least one vertex from $A$ 
has 
to be loaded, therefore $\rho \le 1$. In~b), each computation needs 
two 
out-degree 1 vertices, one from vector $a$ and one from vector $b$, 
resulting 
in $u=2$. Thus, $\rho \le \frac{1}{2}$.
\begin{figure}
	\includegraphics[width=1\columnwidth]{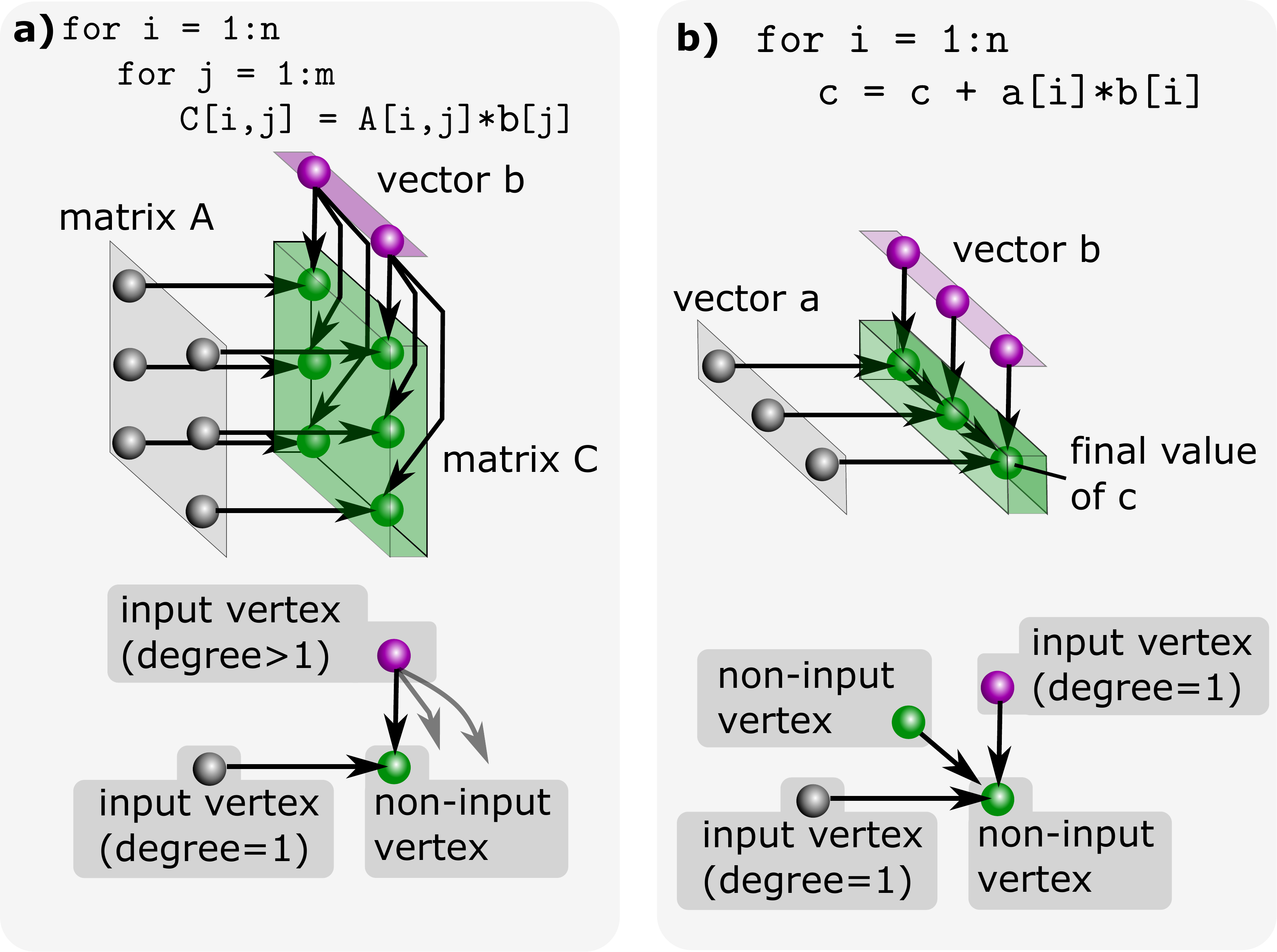}
	\caption{cDAGs with out-degree 1 input vertices. a) $u_a = 1$, $\rho_a 
		\le 
		1$. b) $u_b = 2$, $\rho_b \le \frac{1}{2}$.}
	\label{fig:cdagsRho1}
	
\end{figure}

\Note{We use the above lemma to derive the computational intensity of 
		statement \mbox{$S1$} in LU factorization 
		(Figure{~\ref{fig:prog_rep}}). }

\section{Data Reuse Across Multiple Statements}
\label{sec:mult_statements}
	Almost all computational kernels contain multiple 
	statements connected by data dependencies --- e.g., column update 
	(\mbox{$S1$}) 
	and 
	trailing matrix update (\mbox{$S2$}) in LU factorization 
	(Figure{~\ref{fig:prog_rep}}).
	In this section we examine how these dependencies influence the total I/O 
	cost 
	of a program.

Consider a program containing two statements $S$ and $T$:
{\scriptsize
	\begin{align}
	\nonumber
	\mathbf{for }\hspace{0.5em} &\gamma^1 \in \Gamma^1, \mathbf{for 
	}\hspace{0.5em} \gamma^2 
	\in 
	\Gamma_2(\gamma^1), ... , \mathbf{for }\hspace{0.5em} \gamma^k \in 
	\Gamma_k(\gamma^1, 
	\dots, 
	\gamma^{k-1}) : \\
	\nonumber
	&S : A_0[\boldsymbol{\phi}_0(\mathbf{\gamma})] \leftarrow 
	f(A_1[\boldsymbol{\phi}_1(\mathbf{\gamma})], 
	A_2[\boldsymbol{\phi}_2(\mathbf{\gamma})], \dots, 
	A_m[\boldsymbol{\phi}_m(\mathbf{\gamma})])	 \\
	\nonumber
	\mathbf{for }\hspace{0.5em} &\lambda^1 \in \Lambda^1 : \mathbf{for 
	}\hspace{0.5em} \lambda^2 \in 
	\Lambda^2(\lambda^1),\dots,\mathbf{for }\hspace{0.5em} \lambda^l \in 
	\Lambda^l(\lambda^1, 
	\dots, 
	\lambda^{l-1}) : 
	\\
	\nonumber
	&T : B_0[\boldsymbol{\chi}_0(\mathbf{\lambda})] \leftarrow 
	g(B_1[\boldsymbol{\chi}_1(\mathbf{\lambda})], 
	B_2[\boldsymbol{\chi}_2(\mathbf{\lambda})], \dots, 
	B_n[\boldsymbol{\chi}_n(\mathbf{\lambda})])
	\end{align}
}%

Denote \mbox{$Q_S$} and \mbox{$Q_T$} as I/O costs of statements \mbox{$S$} and 
	\mbox{$T$} if executed separately, and \mbox{$Q_{tot}$} a total I/O cost of 
	the 
	above program.
Assume that there is at least one array that is 
accessed both in $S$ 
and $T$, that is $\exists i, j : A_i = B_j$.
An I/O optimal schedule could take advantage of 
it by possibly fusing 
statements 
$S$ and $T$: once some vertices of $A_i$ are loaded, 
they could be used to compute both $A_0$ (statement $S$) 
and 
$B_0$ (statement $T$), yielding \mbox{$Q_{tot} < Q_S + Q_T$}. However, 
determining explicitly which loops 
should be fused to maximize locality is proven to be 
NP-hard~\cite{loopFusion}. 
Therefore, here we focus 
only on 
the I/O lower bounds, or, in other words, what is the 
maximum possible ``benefit'' of any data reuse between 
statements.

\begin{figure*}
	\includegraphics[width=2.1\columnwidth]{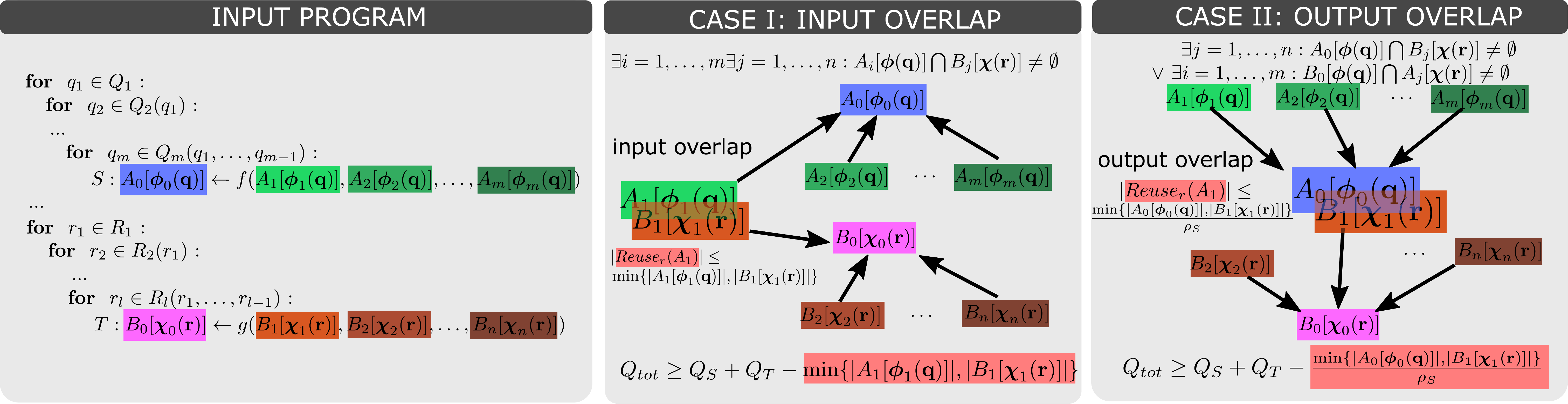}
	\caption{Data reuse across multiple statements.}
	\label{fig:prog_rep2}
	
\end{figure*}

There are two cases in which the data reuse can occur %
	(Figure~\ref{fig:prog_rep2}):
	\textbf{I)} input overlap, where shared 
	arrays are inputs for all statements, \textbf{II)} output 
	overlap, where the output array of one statement is the 
	input array of another. 

\noindent
\textbf{Case I)}. Assume there are \mbox{$w$} statements in the 
program, 
	and there are 
	\mbox{$k$} arrays \mbox{$A_j, j = 1,\dots,k$} which are shared 
between at 
	least 
	two 
	statements. We still evaluate each statement 
	separately, but we will subtract the upper bound on 
	shared loads $Q_{tot} \ge $ \linebreak $\sum_{i 
			= 1}^{w} Q_i - \sum_{j=1}^{k} |Reuse(A_j)|$, where 
	\mbox{$|Reuse(A_j)|$} is the reuse bound on array \mbox{$A_j$}
	(Section{~\ref{sec:reuseSetSize}}).

\noindent	
\textbf{Case II)}. Consider each pair of ``producer-consumer'' 
statements $S$ 
and $T$, that is, the output of $S$ is the input of 
$T$. The I/O lower bound $Q_S$ of statement $S$ does not change due 
to the 
reuse.
 On the other 
hand, 
it may invalidate $Q_T$, as the  
dominator set of $T$ formulated in Section~\ref{sec:access_sizes}
may not be minimum --- inputs of a statement may not be 
graph inputs anymore. For each ``consumer'' statement $T$ we 
reevaluate 
$Q_T' \le Q_T$ using Lemma~\ref{lma:output_reuse}. For a 
program consisting of $w$ statements connected by the 
output 
overlap, we have $Q_{tot} \ge 
\sum_{i = 1}^{w} Q_i'$. Note that for each ``producer'' statement 
$i$, 
$Q_i' = 
Q_i$ (output overlap does not change their I/O lower bound).

\subsection{Case I. Input Reuse and Reuse Size}
\label{sec:reuseSetSize}

Consider two statements $S$ and $T$, which share one input array $A_i$.
Denote $|A_i(\bm{R}_S)|$ the total number of accesses to $A_i$ during the 
I/O optimal execution of a program that contains only statement $S$. 
Analogously, denote $|A_i(\bm{R}_T)|$ for a program containing only $T$.
Define $Reuse(A_i)$ as a number of loads from $A_i$ which are shared 
between statements.

\begin{restatable}{lma}{inputreuse}
	\label{lma:inputreuse}
	The I/O cost of a program containing statements $S$ and $T$ which share the 
	input array $A_i$ is bounded by
	
	$$Q_{tot} \ge Q_{S} + Q_{T} - Reuse(A_i)$$
	
	where $Q_S$, $Q_T$ are the I/O costs of a program containing only statement 
	$S$ or $T$, respectively. Furthermore, we have:
	
	\begin{align}
	\label{eq:reuseSize}
	\nonumber
	Reuse(A_i) \le \min\{|A_i(\bm{R}_S)|, |A_i(\bm{R}_T)|\}
	\end{align}
	{where $|A_i(\bm{R}_S)|$ and $|A_i(\bm{R}_T)|$ are the number of 
		accesses to $A_i$ during the optimal execution of statements 
		$S$ 
		and 
		$T$ separately.}
\end{restatable}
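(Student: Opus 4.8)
The plan is to prove the two claims separately: first the additive lower bound $Q_{tot} \ge Q_S + Q_T - Reuse(A_i)$, and then the cap $Reuse(A_i) \le \min\{|A_i(\bm{R}_S)|, |A_i(\bm{R}_T)|\}$. The central device for the first claim is to take an \emph{optimal} schedule of the combined program --- one realizing $Q_{tot}$ load/store operations --- and \emph{project} it onto each of the two statements, producing a valid schedule for $S$ alone and a valid schedule for $T$ alone. Because $A_i$ is a shared \emph{input} (Case~I), the only way the combined schedule can do better than running $S$ and $T$ back-to-back is by letting a single load of an $A_i$ vertex feed computations of both statements; counting exactly these doubly used loads is what produces the $-Reuse(A_i)$ term.

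Concretely, I would classify every I/O operation of the fixed optimal schedule. Each load of a vertex $v \in A_i$ is tagged ``used by $S$'' if, while the red pebble placed by that load persists, some evaluation of $S$ reads $v$, and analogously ``used by $T$''; a load used by both is a \emph{shared} load. The $S$-projection retains every compute and store of $S$ together with all loads used by $S$ (of $A_i$ and of the other inputs of $S$), and symmetrically for the $T$-projection. Each projection is a legal pebbling of its statement --- predecessors are in fast memory exactly when needed, and the $M$-pebble bound can only be relaxed by deleting the other statement's operations --- so its cost is at least $Q_S$ (resp.\ $Q_T$) by the definition of these quantities. Every operation of the combined schedule lands in exactly one projection \emph{except} the shared loads of $A_i$, which land in both. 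Writing $n_S$, $n_T$, $n_{sh}$ for the operations exclusive to $S$, exclusive to $T$, and shared, we have $Q_{tot} = n_S + n_T + n_{sh}$, while $Q_S \le n_S + n_{sh}$ and $Q_T \le n_T + n_{sh}$; adding the latter two and substituting gives $Q_S + Q_T \le Q_{tot} + n_{sh} = Q_{tot} + Reuse(A_i)$, which rearranges to the claimed bound.

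For the second claim I would argue that a shared load must load a vertex of $A_i$ that is actually consumed by $S$, so the number of shared loads cannot exceed the number of $A_i$ vertices that statement $S$ touches, namely $|A_i(\bm{R}_S)|$; the identical argument on the $T$-side gives $|A_i(\bm{R}_T)|$, and taking the smaller of the two yields $Reuse(A_i) \le \min\{|A_i(\bm{R}_S)|, |A_i(\bm{R}_T)|\}$. The main obstacle is making the projection rigorous: I must show that discarding one statement's operations never strands a predecessor pebble needed by the other and never violates the capacity bound, and I must pin down ``shared load'' so that a vertex reloaded several times is accounted for consistently (matching whether $|A_i(\bm{R}_S)|$ counts distinct vertices or load events). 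The disjoint access property helps here, since within each statement a vertex of $A_i$ is referenced by a single access vector, preventing hidden double counting inside a statement and isolating all cross-statement reuse into the shared-load set.
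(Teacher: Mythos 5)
Your proof is correct, but it reaches the additive bound by a different mechanism than the paper. The paper never projects the combined optimal schedule; instead it argues in the ``forward'' direction: it fixes a candidate reusable subset $\Theta_i \subseteq A_i$, looks at the dependent output sets $\Theta_S$ and $\Theta_T$, and observes that whichever of the two is computed first must pay for loading $\Theta_i$ in full, so the only possible saving is the (at most $|\Theta_i|$) loads that the second computation can still find resident in fast memory; it then caps this saving by the number of $A_i$-loads each single-statement optimal schedule performs, giving the $\min$. Your argument instead fixes the optimal \emph{combined} schedule, splits it into an $S$-projection and a $T$-projection, verifies each is a legal pebbling of its single-statement cDAG (so costs at least $Q_S$, resp.\ $Q_T$), and recovers the inequality by double-counting the shared loads $n_{sh}$. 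Your route is the more standard decomposition technique for additive I/O bounds and is arguably tighter in its bookkeeping for the first inequality: the identity $Q_S + Q_T \le Q_{tot} + n_{sh}$ is explicit rather than argued via orderings of $\Theta_S$ and $\Theta_T$. The paper's route avoids having to certify validity of projected schedules, which is the one place your argument needs care (stranded predecessors, reloads of evicted intermediates, and the capacity bound --- all of which you correctly flag and which do go through). On the cap $Reuse(A_i) \le \min\{|A_i(\bm{R}_S)|, |A_i(\bm{R}_T)|\}$ both proofs share the same looseness: a vertex of $A_i$ evicted and reloaded can generate more than one shared load, and neither your argument nor the paper's fully reconciles the count of shared load \emph{events} with the access counts $|A_i(\bm{R}_S)|$, $|A_i(\bm{R}_T)|$ taken from the separately-optimal schedules; you at least name this gap explicitly, which the paper does not.
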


\begin{proof}
	Consider an optimal sequential schedule of a cDAG $G_S$ containing 
	statement 
	$S$ only. For any subcomputation $V_s$ and its associated 
	iteration domain $\bm{R}_s$ its minimum dominator set is
	$\mathit{Dom}(V_s) = \bigcup_{j=1}^m A_j(\bm{R}_s)$. To compute $V_{S}$, 
	at least $\sum_{i=1}^{m}|A_j(\bm{R}_s)| - M$ vertices have to 
	be loaded, as only 
	$M$ vertices can be reused from previous subcomputations.

	We seek if any loads can be avoided in the common schedule if we add 
	statement 
	$T$, denoting its 
	cDAG $G_{S+T}$. 
	Consider a subset  
	$A_i(\bm{R}_x)$ of vertices in $A_i$. 
	
	Consider some subset of vertices in $A_i$ which potentially could be reused 
	and 
	denote it $\Theta_i$. Now 
	denote all vertices in $A_0$ (statement 
	$S$) which depend on any vertex from $\Theta_i$ as $\Theta_S$, and, 
	analogously, set $\Theta_T$ for statement $T$. Now consider these two 
	subsets $\Theta_S$ and $\Theta_T$ separately. If $\Theta_S$ is 
	computed before $\Theta_T$, then it had to load all vertices from 
	$\Theta_i$, 
	avoiding no loads compared to the schedule of $G_S$ only. Now, computation 
	of 
	$\Theta_T$ may take benefit of some vertices from $\Theta_i$, which can 
	still 
	reside 
	in fast memory, avoiding up to $|\Theta_i|$ loads. 
	
	The total number of avoided loads is bounded by the number of loads from 
	$A_i$ 
	which are shared by both $S$ and $T$. Because statement $S$ loads at most  
	$|A_i(\bm{R}_S)|$ vertices from $A_i$ during optimal schedule of $G_S$, and 
	$T$ 
	loads at most $|A_i(\bm{R}_T)|$ of them for $G_T$, the upper bound of 
	shared, 
	and possibly avoided loads is $Reuse(A_i) = \min\{|A_i(\bm{R}_S)|, 
	|A_i(\bm{R}_T)|\}$.
	
\end{proof}

The \textbf{reuse size}
is defined as $Reuse(A_i) = \min\{|A_i(\bm{R}_S)|,$  
$|A_i(\bm{R}_T)|\}$. Now, how to find $|A_i(\bm{R}_S)|$ and $|A_i(\bm{R}_T)|$?

Observe that $|A_i(\bm{R}_S)|$ is a property of $G_S$, that is, the cDAG 
containing statement $S$ only. Denote the I/O optimal schedule 
parameters of $G_S$: $V^S_{max}$, $X^S_0$, and $|A_i(\bm{R}^S_{max}(X^S_0))|$ 
(Section~\ref{sec:iobound_singlestatement}). Similarly, for $G_T$: 
$V^T_{max}$, $X^T_0$, and $|A_i(\bm{R}^T_{max}(X^T_0))|$.
We now derive: 1) at least how 
many subcomputations does the optimal schedule have: $s \ge 
\frac{|V|}{|V_{max}|}$, 2) at least how many accesses to $A_i$ are 
performed 
per optimal subcomputation $|A_i(\bm{R}_{max}(X_0))|$. Then:

\begin{align}
Reuse(A_i) = \min\{&|A_i(\bm{R}^S_{max}(X^S_{0}))| 
\frac{|V^S|}{|V^S_{max}|}, \\
\nonumber
|&A_i(\bm{R}^T_{max}(X^T_{0}))| \frac{|V^T|}{|V^T_{max}|}\}
\end{align}

\noindent \macb{Example:}
Consider the following code:
\vspace{1em}
\begin{lstlisting}[
label={lst:doubleMM}, mathescape=true]
for i = 1:N    for j = 1:N     for k = 1:N
S: D[i,j,k] = A[i,k] * B[k,j]
T: E[i,j,k] = C[i,k] * B[k,j]
end; end; end
\end{lstlisting}

We now derive the I/O lower bound of this program:
\begin{enumerate}
	\item \textbf{statement S}. Denote $I_{h}, 
	J_{h}, K_{h}$ as the number of different values iteration variables 
	$i$, $j$, and $k$ take during the maximal subcomputation $V_{h}$. Then:
	\begin{itemize}
		\item Access sizes (Lemma~\ref{lma:rectTiling}):
		
		$|V^S_{h}| = I_{h}J_{h}K_{h}$, 
		$|A[i,k](\mathbf{R}^S_{h})| = I_{h}K_{h}$, 
		$|B[k,j](\mathbf{R}^S_{h})| = K_{h}J_{h}$
		\item Finding $\psi(X)$ (Optimization problem~\ref{eq:findingX}):
		
		$|V^S_{h}| = \Big(\frac{X}{2}\Big)^2, |A[i,k](\mathbf{R}^S_{h})| = 		
		|B[k,j](\mathbf{R}^S_{h})| = \Big(\frac{X}{2}\Big)^2$
		\item Finding $X_0$ (Equation~\ref{eq:drhodx}):
		
		$X^S_0 = 2M$, $I_h = J_h = K_h = {M}$, $V^S_{h} = 
		M^2$,
		\item Finding the lower bound (Equation~\ref{eq:Qlowerbound}):
		
		$\rho_S = M$, $Q_S = \frac{N^3}{{M}}$
	\end{itemize}
	\item \textbf{statement T}. Analogous to \textbf{S}
	\item $\mathbf{Reuse(}$\texttt{B}$)$
	\begin{itemize}
		\item $\frac{|V^S|}{|V^S_{max}|} = \frac{|V^T|}{|V^T_{max}|}=
		\frac{N^3}{M^2}$,
		$|B[k,j](\mathbf{R^S_{max}})| = KJ = M$, 
		\item $\mathbf{Reuse(}$\texttt{B}$) = \frac{N^3}{M} $
	\end{itemize}
	\item \textbf{I/O lower bound (Lemma~\ref{lma:inputreuse}):} $Q_{tot} = Q_S 
	+ Q_T - Reuse($\texttt{B}$) 
	= 
	\frac{N^3}{M}$
\end{enumerate}

\noindent \textbf{Note:} This bound is attainable by fusing the statements, 
caching $M-1$ elements of matrix $B$, and streaming matrices $A$ and $C$.

\subsection{Case II. Output Reuse and Access Sizes}
\label{sec:output_reuse}

Consider the case where \emph{output} \Ao{$A_{0}$} of 
the statement 
$S$ is 
also the \emph{input} \Bj{$B_{j}$} of statement $T$. 
Consider 
furthermore subcomputation $V_h$ of statement $T$ (and 
its associated iteration domain $\bm{R}_h$).
Any path from the graph 
inputs to vertices in \Bo{$B_0(\bm{R}_h)$} must pass 
through 
vertices in \Bj{$B_{j}(\bm{R}_h)$}. Now
the question is the following:
is there a smaller set of vertices $B_{j}'(\bm{R}_h)$,
$|B_{j}'(\bm{R}_h)| < B_{j}(\bm{R}_h)$ such that 
every 
path from graph inputs to \Bj{$B_{j}(\bm{R}_h)$}
must pass through it?

Denote computational intensity of statement $S$ as 
$\rho_S$. Then we 
state the following lemma:
\begin{restatable}{lma}{outputReuse}
	\label{lma:output_reuse}
	Any dominator set of set $B_{j}(\bm{R}_h)$ must be of size at 
	least $|\mathit{Dom}(B_{j}(\bm{R}_h))| \ge 
	\frac{|B_{j}(\bm{R}_h)|}{\rho_S}$.
\end{restatable}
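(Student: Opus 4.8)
The plan is to read the vertex set $B_{j}(\bm{R}_h)$ as the set of outputs of a single subcomputation of the producer statement $S$, and then to invoke the maximal computational intensity $\rho_S$ to bound the size of \emph{any} dominator of those outputs from below. Intuitively, a dominator set $D$ of $B_{j}(\bm{R}_h)$ is exactly a collection of vertices from which all of $B_{j}(\bm{R}_h)$ must be (re)computed by $S$; since $\rho_S$ caps the number of output vertices of $S$ that can be produced per input vertex, producing $|B_{j}(\bm{R}_h)|$ such outputs forces $|D|\ge |B_{j}(\bm{R}_h)|/\rho_S$.

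First I would make the identification precise. Because $B_j = A_0$ is the output array of $S$ and each evaluation of $S$ writes exactly one vertex of $A_0$, the vertices $B_{j}(\bm{R}_h)$ are the computed (non-input) vertices of a well-defined subcomputation $V_S$ of $S$, with $|V_S| = |B_{j}(\bm{R}_h)|$. Next I would relate dominators: let $D = \mathit{Dom}(B_{j}(\bm{R}_h))$ be an arbitrary dominator set. Every path from a graph input into $V_S$ ends at a vertex of $B_{j}(\bm{R}_h)$ and hence, by the definition of a dominator, must cross $D$; therefore $D$ is also a dominator of the subcomputation $V_S$, and in particular $|\mathit{Dom}_{min}(V_S)| \le |D|$.

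With this in hand I would apply the definition of the maximal computational intensity of $S$ from Lemma~\ref{lma:compIntensity}: among all valid subcomputations, $\rho_S$ is the largest attainable ratio of subcomputation volume to its input surface, so in particular $|V_S| \le \rho_S \cdot |\mathit{Dom}_{min}(V_S)| \le \rho_S \cdot |D|$. Substituting $|V_S| = |B_{j}(\bm{R}_h)|$ and rearranging gives $|D| = |\mathit{Dom}(B_{j}(\bm{R}_h))| \ge |B_{j}(\bm{R}_h)|/\rho_S$, which is the claim, and since $D$ was arbitrary it holds for every dominator set.

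The step I expect to be the main obstacle is the last inequality $|V_S| \le \rho_S\,|\mathit{Dom}_{min}(V_S)|$, because the computational intensity is defined through $\rho = |V_{max}|/(X - M)$, i.e.\ in terms of the \emph{reduced} surface $X - M$ (loads) rather than the raw dominator size $X$, and relative to the operating scale $X_0$. I would discharge this by arguing that the relevant subcomputation $V_S$ lives at a bounded scale (its outputs feed a single subcomputation $V_h$ of $T$, so $|B_{j}(\bm{R}_h)|$ is controlled by the $X$-partition parameter of $T$), and that subtracting $M$ only enlarges the ratio, so $|V_S|/|\mathit{Dom}_{min}(V_S)| \le |V_S|/(|\mathit{Dom}_{min}(V_S)| - M) \le \rho_S$ in the direction we need; the degenerate case $|\mathit{Dom}_{min}(V_S)| \le M$ is handled separately since then the bound is immediate. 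A secondary detail to nail down is that a dominator of $B_{j}(\bm{R}_h)$ lying strictly upstream of $S$'s own inputs can only be larger than the $S$-internal cut, so restricting attention to the producer statement $S$ loses nothing.
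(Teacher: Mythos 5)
Your proposal is correct and follows essentially the same route as the paper: both arguments reduce the claim to the statement that, by the maximal computational intensity $\rho_S$ of the producer statement (Lemma~\ref{lma:compIntensity}), at most $\rho_S$ vertices of $A_0 = B_j$ can be produced per loaded/dominator vertex, so any dominator of $B_j(\bm{R}_h)$ must contain at least $|B_j(\bm{R}_h)|/\rho_S$ vertices. The only cosmetic difference is that the paper phrases the final step as a contradiction (a smaller dominator would let one compute all of $B_j(\bm{R}_h)$ with too few loads, violating Lemma~\ref{lma:compIntensity}), whereas you run the inequality chain directly and explicitly discharge the $X$ versus $X-M$ discrepancy, which the paper glosses over.
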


\begin{proof}
	By Lemma~\ref{lma:compIntensity}, for one loaded vertex, 
	we may 
	compute at most $\rho_S$ vertices of $A_0$. These are 
	also vertices of $B_{j}$. Thus, to compute 
	$|B_{j}(\bm{R}_h)|$ vertices of $B_{j}$, at least 
	$\frac{|B_{j}(\bm{R}_h)|}{\rho_S}$ loads must be 
	performed. We just need to show that at least that many  
	vertices have to be in any dominator set 
	$\mathit{Dom}(B_{j}(\bm{R}_h))$. 
	Now, consider the converse: There is a vertex set
	$D = \mathit{Dom}(B_{j}(\bm{R}_h))$ such that $|D| < 
	\frac{|B_{j}(\bm{R}_h)|}{\rho_S}$. But that would mean, 
	that we could potentially compute all $|B_{j}(\bm{R}_h)|$ 
	vertices by only loading $|D|$ vertices,   
	violating Lemma~\ref{lma:compIntensity}.
\end{proof}

\begin{corollary}
	\label{cor:output}
	Combining Lemmas~\ref{lma:output_reuse} 
	and~\ref{lma:rectTiling},
	the data access size of $|B_{j}(\bm{R}_h)|$ during 
	subcomputation $V_h$ is
	\begin{equation}
	\label{eq:inputOutputReuseSize}
	|B_{j}(\bm{R}_h)| \ge 
	\frac{\prod_{k=1}^{dim(\bm{\phi}_j)}|R_{h,j}^k|}{\rho_S}.
	\end{equation}
\end{corollary}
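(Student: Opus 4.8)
The plan is to obtain the stated bound by chaining the two cited lemmas, reading the left-hand side of Equation~\ref{eq:inputOutputReuseSize} as the size of the \emph{minimum} dominator set of $B_j(\bm{R}_h)$ --- equivalently, the reduced access set $B_j'(\bm{R}_h)$ introduced just above the corollary. This is the quantity that actually has to be materialized (loaded or recomputed from $S$'s inputs) to supply the inputs of $T$ during $V_h$, and it is therefore what replaces the naive count $|B_j(\bm{R}_h)|$ in the revised consumer bound $Q_T'$. The whole argument is a single substitution, so no new optimization or combinatorial reasoning is required beyond what the two lemmas already provide.

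First I would invoke Lemma~\ref{lma:rectTiling} for statement $T$. Since the lower bound is governed by the volume-maximizing subcomputation $V_{max}$ (cf.~Lemma~\ref{lma:compIntensityPhi}), and Lemma~\ref{lma:rectTiling} guarantees that for such a $V_h$ the access-size inequalities of Lemma~\ref{lma:projection_bound} become equalities, I obtain the exact number of distinct vertices of the shared array touched during $V_h$,
$$|B_j(\bm{R}_h)| = \prod_{k=1}^{dim(\bm{\phi}_j)}|R_{h,j}^k|.$$
This expresses the naive input ``surface'' of $B_j$ purely in terms of the per-variable iteration ranges, independently of how those vertices were produced.

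Second I would apply Lemma~\ref{lma:output_reuse} to this set. Because $B_j = A_0$ is the output of the producer $S$ rather than a graph input, its vertices can be regenerated at the rate allowed by $S$'s computational intensity, so the relevant cost is the size of a minimum dominator set rather than $|B_j(\bm{R}_h)|$ itself. Lemma~\ref{lma:output_reuse} states precisely that every dominator set of $B_j(\bm{R}_h)$ has size at least $|B_j(\bm{R}_h)|/\rho_S$. Substituting the equality from the first step then gives
$$|B_j'(\bm{R}_h)| \;\ge\; \frac{|B_j(\bm{R}_h)|}{\rho_S} \;=\; \frac{\prod_{k=1}^{dim(\bm{\phi}_j)}|R_{h,j}^k|}{\rho_S},$$
which is exactly Equation~\ref{eq:inputOutputReuseSize}.

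The step demanding the most care is the consistency between the two regimes invoked: Lemma~\ref{lma:rectTiling} yields an \emph{equality} only for the subcomputation maximizing $|V_h|$, so I must confirm that this is the same $V_h$ over which the reduced bound is evaluated, and that it is legitimate to pair the access count of $T$ with the producer intensity $\rho_S$ in the denominator (rather than $T$'s own intensity). Once this matching is checked, the corollary follows immediately from the substitution, with no further obstacle, since both the equality for $|B_j(\bm{R}_h)|$ and the dominator lower bound are already established.
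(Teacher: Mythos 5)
Your proposal is correct and matches the paper's (implicit) argument: the corollary is obtained exactly by substituting the rectangular access-size equality of Lemma~\ref{lma:rectTiling} into the dominator-set bound of Lemma~\ref{lma:output_reuse}. You also correctly resolve the notational overloading of the left-hand side as the \emph{effective} (minimum-dominator) access size rather than the raw vertex count, which is how the paper uses Equation~\ref{eq:inputOutputReuseSize} in the subsequent examples.
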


\noindent \macb{Example (Modified Matrix 
	Multiplication~\cite{general_arrays}):}

\begin{lstlisting}[
label={lst:comp_onthefly}, mathescape=true]
  for i = 1:N
     for j = 1:N
S:      A[i,j] = $e^{2\pi \sqrt{-1} (i - 1)(j-1)/N}$
        for k = 1:N
T:         C[i,j] = A[i,k]*B[k,j] + C[i,j]
end; end; end
\end{lstlisting}

Consider the code above.
Statement $S$ does not have any input arrays (we assume that iteration 
variables $i$ and $j$ are always loaded in the registers. Therefore, there 
are no loads performed during the execution of $S$, {so \mbox{$\rho_S 
		\rightarrow \infty$} for large $N$}. Statement $T$, on the other hand, 
		if 
executed separately, would perform at least $Q_T \ge \frac{2N^3}{\sqrt{M}}$ 
loads. 
However, using Corollary~\ref{cor:output}, {we obtain access size 
	\mbox{$|A_1(\bm{R}_h)| \ge \frac{|R_h^i| |R_h^k|}{\rho_S} \ge 0$}}, 
and the combined bound is 
$Q_{T+S} \ge \frac{N^3}{M}$. This bound is tight, as the I/O 
optimal 
schedule would cache $M-1$ vertices of $C$, and for each loaded vertex of 
$B$ would compute $M-1$ new vertices of $C$.

\section{Deriving Parallel I/O Lower Bounds}
\label{sec:parredblue}

{We now establish how our method applies to a parallel machine with 
$P$ 
	processors (Section{~\ref{sec:machineModel}})}.
Each processor $p_i$ owns its private fast memory 
which can hold up to $M$ words, represented in the cDAG as 
$M$ red vertices with $p_i$'s ``hue''. Red vertices of 
different hues (belonging to different processors) cannot be 
shared between them, but any number of different red 
pebbles may be placed on one vertex. 

All the standard red-blue pebble game rules apply with the 
following modifications:
\begin{enumerate}[leftmargin=*]
	\item \textbf{compute} if all direct predecessors of 
	vertex 
	$v$ have red pebbles of $p_i$'s hue placed on them, one 
	can place a red pebble of $p_i$'s hue on $v$ (no sharing 
	of red pebbles between processors),
	\item \textbf{load} if a vertex $v$ has \emph{any} 
	pebble 
	placed on them,
	a red pebble of 
	any other hue may be placed on a vertex.
\end{enumerate}

From this game definition, it follows that from a perspective 
of a single processor $p_i$, any data is either local (the 
corresponding vertex has $p_i$'s red pebble placed on it), or remote, 
{without a distinction on the remote location (remote access cost is 
	uniform).}

\begin{restatable}{lma}{parallelIO}
	\label{lma:parallelIO}
	The minimum number of I/O operations in a parallel 
	red-blue pebble game,
	played on a cDAG with $|V|$ vertices with $P$ processors 
	each equipped with $M$ red pebbles, is $Q \ge \frac{|V|}{P \cdot \rho}$,
	where $\rho$ is the maximum computational intensity independent of $P$
	(Lemma~\ref{lma:compIntensity}).
\end{restatable}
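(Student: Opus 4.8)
The plan is to reduce the parallel lower bound to the sequential computational-intensity bound of Lemma~\ref{lma:compIntensity} by isolating the busiest processor. First I would observe that in the parallel game computation cannot be shared: by the \textbf{compute} rule, a red pebble of $p_i$'s hue may be placed on a vertex $v$ only if \emph{all} direct predecessors of $v$ carry red pebbles of the \emph{same} hue $p_i$. Hence each non-input vertex is first computed by exactly one processor, and the set of all $|V|$ computations is partitioned among the $P$ processors. Writing $|V_p|$ for the number of vertices computed by processor $p$, we have $\sum_{p=1}^{P}|V_p| = |V|$, so by the pigeonhole principle there is a processor $p^\ast$ with $|V_{p^\ast}| \ge |V|/P$.

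Next I would apply the computational-intensity bound to $p^\ast$ alone. The key claim is that $\rho$ is genuinely independent of $P$, so that Lemma~\ref{lma:compIntensity} governs a single processor verbatim. This holds because the \xparting argument behind $\rho$ depends only on (i) the cDAG structure and (ii) the size $M$ of a single fast memory, both unchanged in the parallel setting: processor $p^\ast$ still holds at most $M$ red pebbles of its own hue, so for every subcomputation $V_h$ in its schedule it must bring in at least $|\mathit{Dom}_{min}(V_h)| - M$ vertices from outside its local memory. The only modification introduced by the parallel rules is the \emph{source} of a load --- a red pebble of $p^\ast$'s hue may now be placed on any vertex already carrying \emph{some} pebble, rather than only a blue one --- but such a move is still counted as one I/O operation. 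Consequently the per-processor I/O count obeys exactly the sequential bound, giving $Q_{p^\ast} \ge |V_{p^\ast}|/\rho$.

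Combining the two steps yields the claim: the I/O cost $Q$ (the number of I/O operations borne by the bottleneck processor) satisfies
\[
Q \ge Q_{p^\ast} \ge \frac{|V_{p^\ast}|}{\rho} \ge \frac{|V|}{P\,\rho}.
\]

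I expect the main obstacle to be the middle step --- rigorously justifying the $P$-independence of $\rho$. One must confirm that nothing in the derivation of Lemma~\ref{lma:compIntensity} (the partition into subcomputations, the dominator-set lower bound of $X - M$ loads per subcomputation, and the resulting inequality $Q \ge |V|/\rho$) uses the assumption that loads originate from slow memory; each of those arguments only counts load and store moves and bounds them through the single-memory capacity $M$. A secondary technical point is to handle input and output vertices consistently with the sequential statement, so that ``$|V|$ vertices'' refers to the same computed set in both the parallel and sequential counts, ensuring the pigeonhole identity $\sum_{p=1}^{P} |V_p| = |V|$ is exact rather than merely a one-sided bound.
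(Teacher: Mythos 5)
Your proposal is correct and follows essentially the same route as the paper: argue that $\rho$ depends only on the cDAG and the per-processor memory $M$ (hence is unchanged by parallelism), use a pigeonhole/load-balancing argument to find a processor computing at least $|V|/P$ vertices, and apply Lemma~\ref{lma:compIntensity} to that processor. The paper's proof is just a terser version of your argument; your added care about non-shareable red pebbles and the counting of remote loads as I/O fills in details the paper leaves implicit.
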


\begin{proof}
	Following the analysis of Section~\ref{sec:boundsSingleStatement} and 
	the parallel machine 
	model (Section~\ref{sec:parredblue}), the computational 
	intensity $\rho$ is independent of a number of parallel 
	processors - it is solely a property of a cDAG and private 
	fast memory size $M$. Therefore, following 
	Lemma~\ref{lma:compIntensity}, what changes with $P$ is the 
	volume of computation $|V|$, as now at least one processor will 
	compute at least $|V_p| = \frac{|V|}{P}$ vertices. 
	By 
	the definition of the computational intensity, the 
	minimum number of I/O operations required to pebble these 
	$|V_p|$ vertices is $\frac{|V_p|}{\rho}$.
\end{proof}

\section{\hspace{-0.0em}Bounds of Parallel LU Factorization}
\label{sec:lu_lowerbound}

In the previous sections, we have analyzed all components of 
the LU factorization algorithm (Figure~\ref{fig:prog_rep}) 
separately. We now provide a full, end-to-end derivation of its  
parallel I/O lower bound using 
our method. Previously, Olivry et 
al.~\cite{olivry2020automated} reported a lower bound for a 
sequential machine $\frac{2}{3}\frac{N^3}{\sqrt{M}}$. To the 
best of our knowledge, this is the first parallel result for 
this algorithm.

Recall that the algorithm  contains two statements:

\noindent
\textbf{S1: \texttt{A[i,k] = A[i,k]/A[k,k]}}

Denote $|R^k_{h}| = K_h$, $|R^i_{h}| = I_h$. Then, we 
have the following (Lemma~\ref{lma:rectTiling}):
\begin{itemize}[leftmargin=0.75em]
	\item $|V_{h}| = K_h I_h$
	\item $|A_1(\mathbf{R}_h)| = K_h I_h$; 
	\hspace{1em} $|A_2(\mathbf{R}_h)| = K_h$
	\item $|\mathit{Dom}(V_h)| = |A_1(\mathbf{R}_h)| + 
	|A_2(\mathbf{R}_h)| = 
	K_h I_h + K_h$
\end{itemize}

We then solve the optimization problem from 
Section~\ref{sec:iobound_singlestatement}:
\begin{align}
\nonumber
\max \text{\hspace{0.5em}} K_h I_h, & \hspace{2em}\text{s.t.}\\
\nonumber
K_h I_h + K_h& \le X \\
\nonumber
I_h& \ge 1 \\
\nonumber
K_h& \ge 1
\end{align}

\begin{figure*}
	\includegraphics[width=1.9\columnwidth]{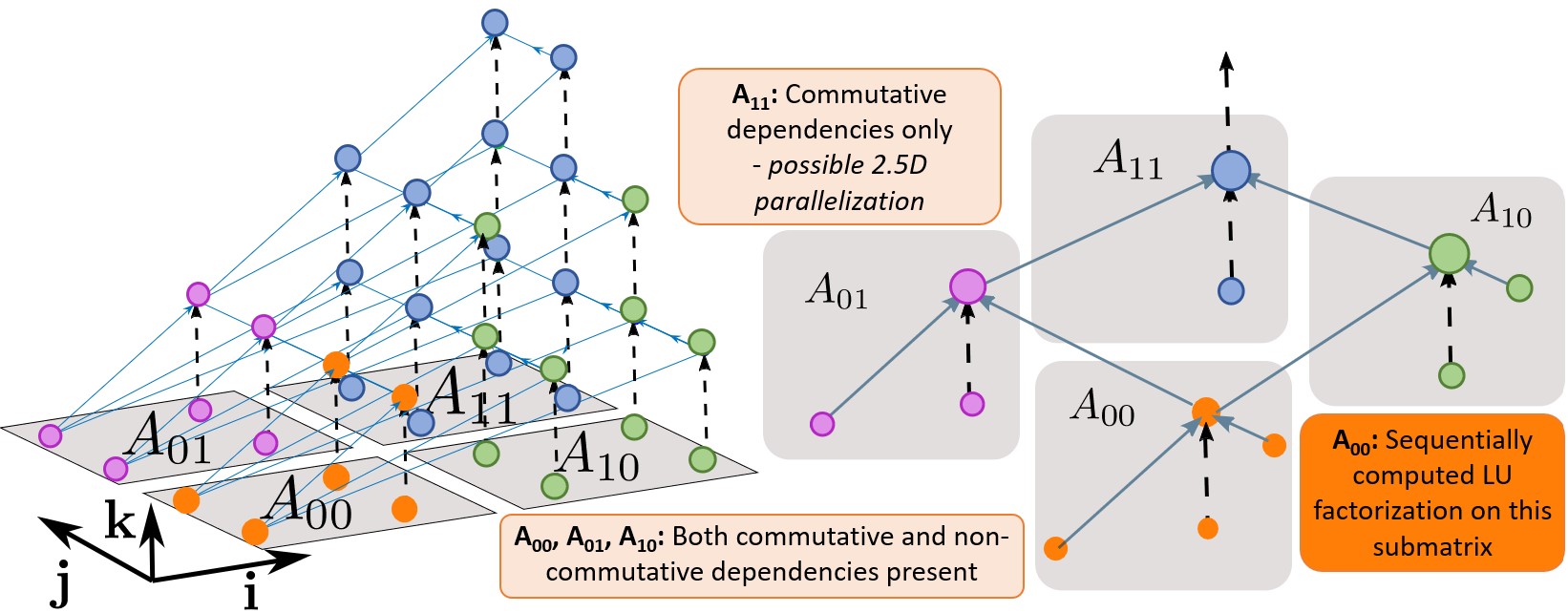}
	\caption{LU cDAG for \mbox{$n=4$} together with the 
		logical 
		decomposition to 
		\mbox{$A_{00}, A_{10}, A_{01}$}, and \mbox{$A_{11}$}. 
		Dashed arrows 
		represent 
		commutative 
		dependencies (reduction of a value). Solid arrows 
		represent non-commutative operations, so 
		any 
		parallel pebbling has to respect the induced order 
		(e.g., no vertex 
		in 
		\mbox{$A_{11}$} can be pebbled before \mbox{$A_{00}$} 
		is pebbled).}
	\label{fig:cdags}
\end{figure*}

Which gives $|V_{max}| = \psi(X) = X - 1$ for $K_h = 1$ and $I_h = X - 1$.
Then $\rho(X) = \frac{|V_{max}|}{X-M} = \frac{X - 1}{X - M}$. However, 
because 
$A_1$ has out-degree 1, we use the bound from 
Lemma~\ref{lma:rho1case}: 
$\rho_{S1} \le 1$. Preserving the correctness of I/O lower bounds,
we use its upper 
bound $\rho_{S1} = 1$.

Finally, we calculate total number of vertices in statement \textbf{S1}: 
$|V_{S1}| = \sum_{k=1}^N(N-k-1) = 
\frac{N(N-1)}{2}$ and conclude that $Q_{S1} \ge 
\frac{|V_1|}{\rho_1} = 
\frac{N(N-1)}{2}$ (Lemma~\ref{lma:compIntensity}).

\vspace{1em}
\noindent
\textbf{S2: \texttt{A[i,j] = A[i,j] - A[i,k]*A[k,j]}}

{
	Denote	\mbox{$|R^k_{h}| = K_h$}, \mbox{$|R^i_{h}| = I_h$}, 
	\mbox{$|R^j_{h}| = 
		J_h$}. 
	Observe that there is  
	an output reuse (Section{~\ref{sec:output_reuse}}) of \texttt{A[i,k]} 
	between 
	statements \mbox{$S1$} (as $A_0$) and 
	\mbox{$S2$} (as $A_2$)
	. We therefore have the 
	access size in statement \textbf{S2:} 
	\mbox{$|A_2(\mathbf{R}_{S2})| = 
		\frac{I_h K_h}{\rho_{S1}} =I_h K_h$}
	(Equation{~\ref{eq:inputOutputReuseSize}}). Note that in this case, where 
	the 
	computational intensity is \mbox{$\rho_{S1} \le 1$},
	the output reuse does not 
	change 
	the 
	access size \mbox{$|A_2(\mathbf{R}_{S2})|$} of statement \mbox{$S2$}. This 
	follows the 
	intuition that it is not beneficial to recompute vertices if the 
	recomputation 
	cost is not lower than loading it from the memory. 
	
	The remaining steps of the I/O lower bound analysis are similar to 
	\mbox{$S1$}.
	We then obtain \mbox{$\rho_{S2} = \frac{\sqrt{M}}{2}$}, \mbox{$|V_{S2}| = 
		\frac{N^3}{3} - N^2 + \frac{2N}{3}$} and finally  \mbox{$Q_{S2} \ge 
		\frac{2N^3 
			- 
			6N^2 + 4N}{3\sqrt{M}}$}.}
The I/O lower bound of the full LU factorization is 
therefore: 
$$	Q_{LU} \ge Q_1 + Q_2 \ge \frac{2N^3 - 6N^2 + 
	4N}{3\sqrt{M}} + \frac{N(N-1)}{2}$$

Using Lemma~\ref{lma:parallelIO} we have the parallel I/O 
lower bound
{\scriptsize
	$${ Q_{P,LU} \ge \frac{2N^3 - 6N^2 + 4N}{3P\sqrt{M}} + 
		\frac{N(N-1)}{2p} = \frac{2N^3}{3P\sqrt{M}} + 
		\mathcal{O}\Big(\frac{N^2}{P}\Big)},$$}
\noindent
which is one of the main contributions of our work.

\section{\conflux}
\label{sec:conflux}

In this section we present
	{\conflux} --- a 
	near \emph{Communication Optimal LU	factorization using 
	{\xparting}}.

\subsection{LU Dependencies and Parallelization}
\label{sec:par_lu_no_pivot}

Due to the dependency structure of LU, the input matrix is often 
divided recursively into four submatrices $A_{00}$, $A_{10}$, 
$A_{01}$, and $A_{11}$~\cite{LUdongarra, 2.5DLU}.
{Arithmetic operations performed in LU create 
	non-commutative dependencies (Figure{~\ref{fig:cdags}}) between 
	vertices 
	in \mbox{$A_{00}$} (LU 
	factorization 
	of the top-left corner of the matrix), \mbox{$A_{10}$}, and \mbox{$A_{01}$} 
	(triangular solve 
	of vertical and top panels of the matrix).} Only $A_{11}$ (Schur complement 
update) has no such dependencies, and may therefore be efficiently 
parallelized in the reduction dimension. Our parallel algorithm utilizes 
this fact and applies different strategies for different parts.
Its high-level summary is presented in Algorithm~\ref{alg:conflux}.

\begin{algorithm}
	\footnotesize
	\caption{\conflux}
	\label{alg:conflux}
	\begin{algorithmic}
		\State $A_t \gets A$
		\For{$t = 1, \dots, \frac{N}{v}$} 
		\State \textbf{1.} Reduce next block column
		\Comment{Cost: 
			$\frac{(N-t\cdot 
				v)\cdot v\cdot M}{N^2}$}
		\State \textbf{2.} $TournPivot(A_t)$ 
		\Comment{Cost: 
			$v^2\left\lceil\log(\frac{N}{\sqrt{M}})\right 
			\rceil$}  
		\State \textbf{3.} Scatter computed $A_{00}$ and $v$ 
		pivot rows 
		\Comment{Cost: $v^2 + v$}
		\State \textbf{4.} Scatter $A_{10}$ 
		\Comment{Cost: $\frac{(N-t\cdot 
				v)v}{P}$}
		\State \textbf{5.} Reduce $v$ pivot rows 
		\Comment{Cost: 
			$\frac{(N-t\cdot 
				v)\cdot v\cdot M}{N^2}$}
		\State \textbf{6.} Scatter $A_{01}$ 
		\Comment{Cost: $\frac{(N-t\cdot 
				v)v}{P}$}
		\State \textbf{7.} $FactorizeA_{10}(A_t)$ \Comment{1D 
			parallel., 
			block-row}
		\State \textbf{8.} Send data from panel $A_{10}$ 
		\Comment{Cost: 
			$\frac{(N-t\cdot 
				v)N\cdot v}{P\sqrt{M}}$}
		\State \textbf{9.} $FactorizeA_{01}(A_t)$ \Comment{1D 
			parallel., block-column} 
		\State \textbf{10.} Send data from panel $A_{01}$ 
		\Comment{Cost: 
			$\frac{(N-t\cdot 
				v)N\cdot v}{P\sqrt{M}}$}
		\State \textbf{11.} $FactorizeA_{11}(A_t)$ 
		\Comment{2.5D parallel.} 
		\State $A_t \gets 
		A_t[rows, v:end]$
		\EndFor			
	\end{algorithmic}
\end{algorithm}

\subsection{Computation Routines}
The computation is performed in $\frac{N}{v}$ steps, 
where $v$ is a tunable blocking parameter.
In each step, only submatrix $A_t$ of input matrix $A$ is updated. Initially, 
$A_t$ 
is set to $A$. $A_t$ is further decomposed to four 
submatrices $A_{00}$, $A_{10}$, $A_{01}$, and $A_{11}$ which are updated by 
routines $TournPivot$, $FactorizeA_{10}$, $FactorizeA_{01}$, and 
$FactorizeA_{11}$ 
(see Figure~\ref{fig:lu_decomp}):

\begin{itemize}[leftmargin=0.75em]
	\item {\mbox{$\bm{A_{00}}$}. This \mbox{$v \times v$} submatrix contains 
		first \mbox{$v$} 
		elements of current \mbox{$v$} pivot rows. It is computed during 
		\mbox{$TournPivot$}, and 
		as 
		it is required to compute \mbox{$A_{10}$} and \mbox{$A_{01}$}, it is 
		redundantly copied 
		to all processors.}
	\item {\mbox{$\bm{A_{10}}$} and \mbox{$\bm{A_{01}}$}. 
		Submatrices \mbox{$A_{10}$} and \mbox{$A_{01}$} of sizes 
		\mbox{$(N-t\cdot 
			v) \times v$} and  \mbox{$v 
			\times (N-t\cdot v)$} are 
		distributed using 1D 
		decomposition among all processors. They are updated using a triangular 
		solve. 1D decomposition 
		guarantees 
		that there are no dependencies between processors, so no communication 
		or 
		synchronization is performed during computation (\mbox{$A_{00}$} is 
		already 
		owned 
		by every 
		processor).}
	\item 	{\mbox{$\bm{A_{11}}$} This \mbox{$(N-t\cdot v) \times (N-t\cdot 
			v)$} 
		submatrix
		is distributed using 2.5D, block-cyclic distribution 
		(Figure{~\ref{fig:lu_decomp}}). 
		First, updated submatrices \mbox{$A_{10}$} and \mbox{$A_{01}$} are 
		broadcast among the 
		processes. Then, \mbox{$A_{11}$} (Shur complement) is updated.
		Finally, the first block column and \mbox{$v$} chosen pivot rows are 
		reduced, 
		which will 
		form \mbox{$A_{10}$} and \mbox{$A_{01}$} in the next iteration.}
\end{itemize}

\noindent \macb{Blocking parameter $\bm{v}$}. The minimum size of each block is 
the number of processor layers in the reduction dimension $v \ge c = 
\frac{PM}{N^2}$. However, to secure high performance, this value should also be 
adjusted to hardware parameters of a given machine (e.g., vector length, 
prefetch distance of a CPU, or warp size of a GPU). Throughout the analysis, we 
assume that $v = a \cdot \frac{PM}{N^2}$ for some small 
constant $a$.  

\begin{figure*}
	\begin{center}
		\includegraphics[width=1.9\columnwidth]{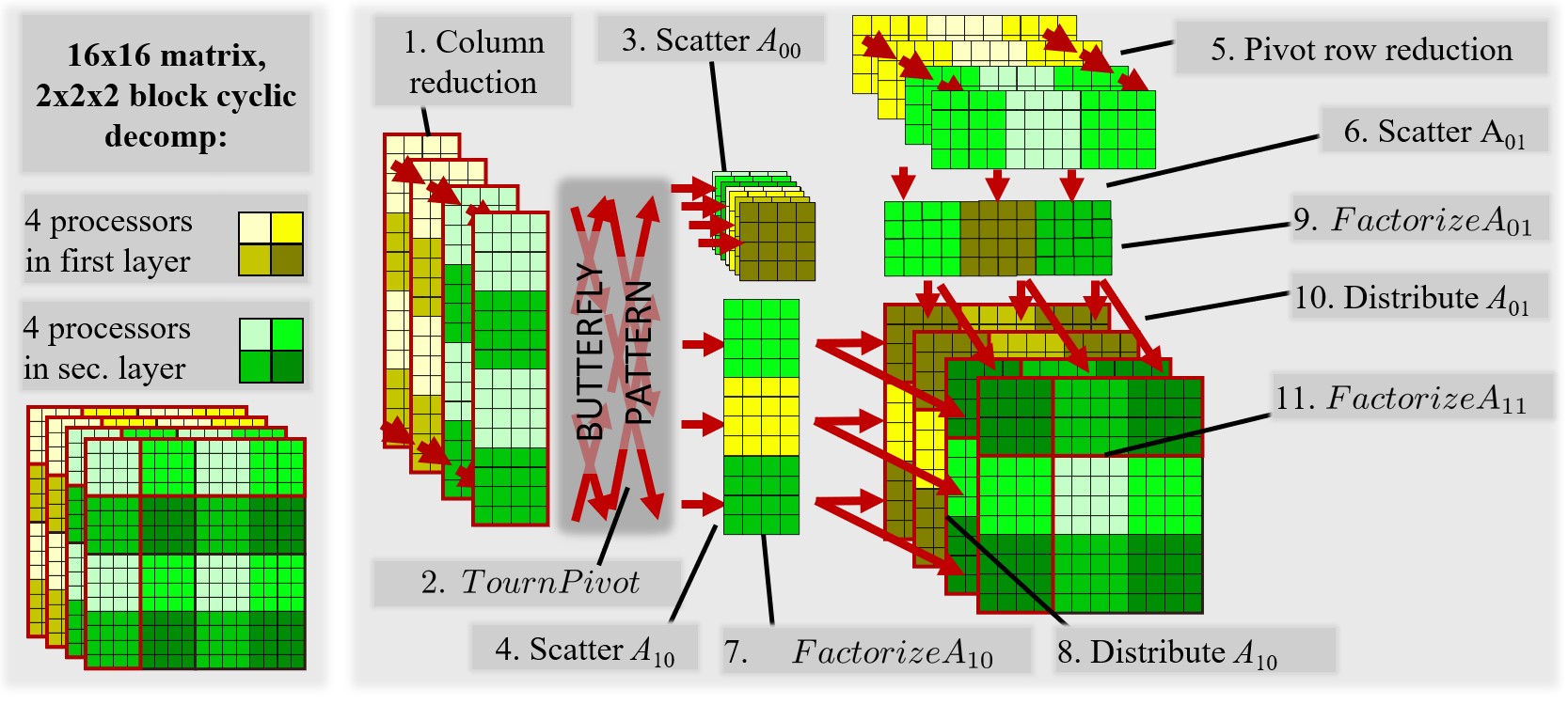}
	\end{center}
	
	\caption{CO$nf$LUX parallel decomposition for $P=8$ 
	processors decomposed into $2 \times 2 \times 2$ grid, 
	together with indicated steps of   
		Algorithm~\ref{alg:conflux}.}
	
	\label{fig:lu_decomp}
\end{figure*}

\subsection{Pivoting}
\label{sec:par_lu_pivot}

Our pivoting strategy differs from state-of-the-art 
block~\cite{lapack}, tile~\cite{plasma}, or recursive~\cite{recursivePivoting} 
pivoting
approaches in two aspects: 
\begin{itemize}[leftmargin=*]
	\item To minimize I/O, we do not swap pivot rows. Instead, we keep track 
	which 
	rows were chosen as pivots and we use masks to update remaining rows.
	\item To reduce latency, we take advantage of our derived blocks 
	and use tournament pivoting~\cite{tourn_pivot}.
\end{itemize}

\noindent
The tournament pivoting finds $v$ 
pivot rows in each step, which are then used to
mask which rows will form the new $A_{01}$ 
and then filter the non-processed row in the next 
step.

\begin{table*}[h]
	\setlength{\tabcolsep}{2pt}
	\renewcommand{\arraystretch}{0.7}
	\centering
	\footnotesize
	
	\begin{tabular}{p{3.2cm}p{3cm}p{3cm}p{4cm}p{4cm}}
		\toprule
		& \textbf{LibSci}
		& \textbf{SLATE} 			
		& \textbf{CANDMC} 			
		& \textbf{\conflux}\\
		\midrule
		\textbf{Decomposition}
		& 
		2D, panel decomp.
		&
		2D, block decomp.
		&
		Nested 2.5D, block decomp.
		& 
		1D / 2.5D, block decomp.
		\\
		\begin{tabular}{l}
			\hspace{-0.6em}
			\textbf{Block size}
		\end{tabular}
		&
		\begin{tabular}{ll}
			\begin{tabular}{l}
				\hspace{-0.5em}
				\includegraphics[width=0.047 \textwidth]
				{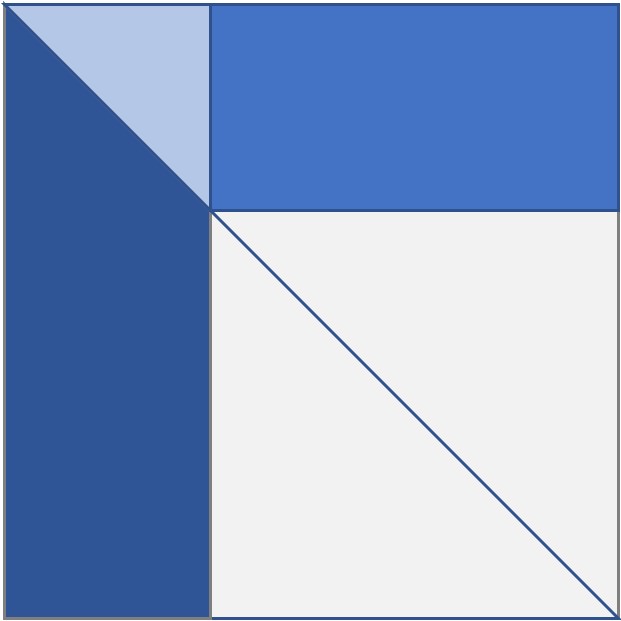}
			\end{tabular}
			&
			\begin{tabular}{l}
				user-specified
			\end{tabular}
		\end{tabular}
		&
		\begin{tabular}{ll}
			\begin{tabular}{l}
				\hspace{-0.5em}
				\includegraphics[width=0.047 \textwidth]
				{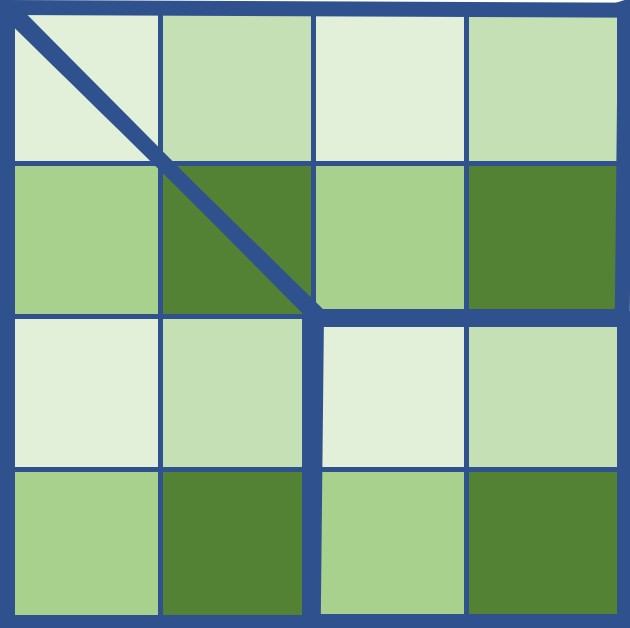}
			\end{tabular}
			&
			\begin{tabular}{l}
				user-specified,\\
				(default 16)
			\end{tabular}
		\end{tabular}
		&
		\begin{tabular}{ll}
			\begin{tabular}{l}
				\hspace{-0.5em}
				\includegraphics[width=0.047 \textwidth]
				{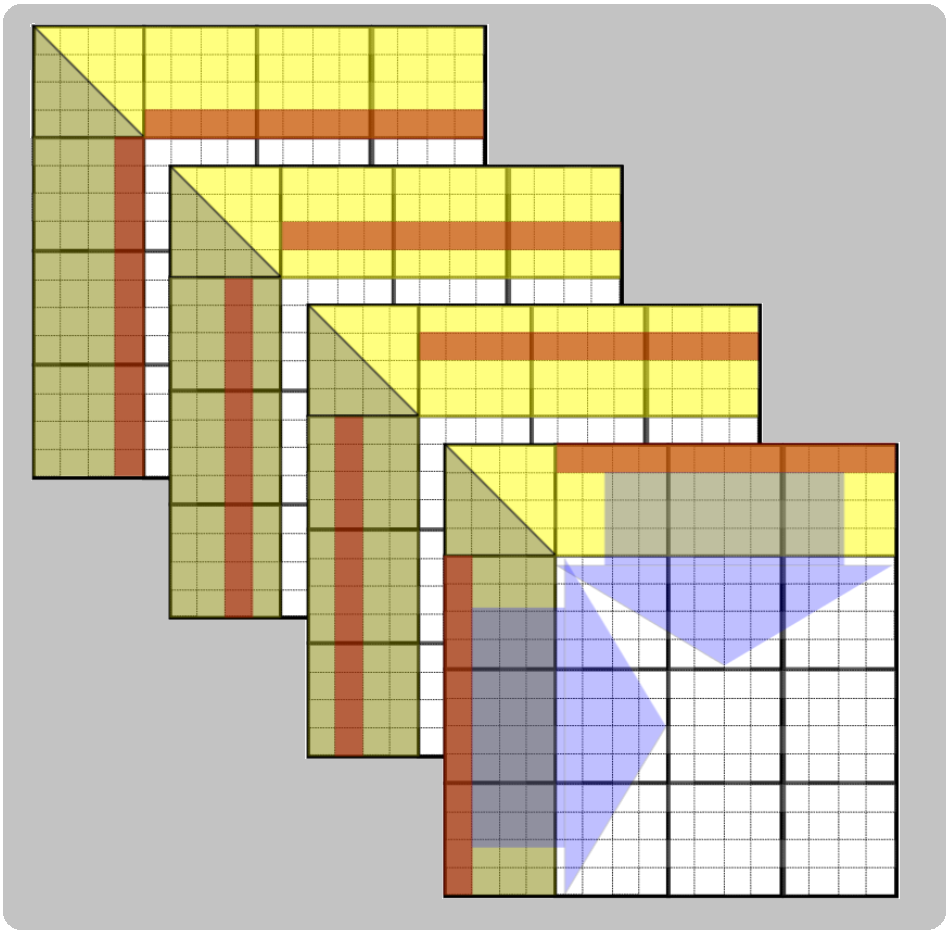}
			\end{tabular}
			&
			\begin{tabular}{l}
				$\frac{N^3}{P\cdot M}$ , $\frac{N^2}{P\sqrt{M}}$
			\end{tabular}
		\end{tabular}
		&
		\begin{tabular}{ll}
			\begin{tabular}{l} 
				\hspace{-0.5em}
				\includegraphics[width=0.052 \textwidth]
				{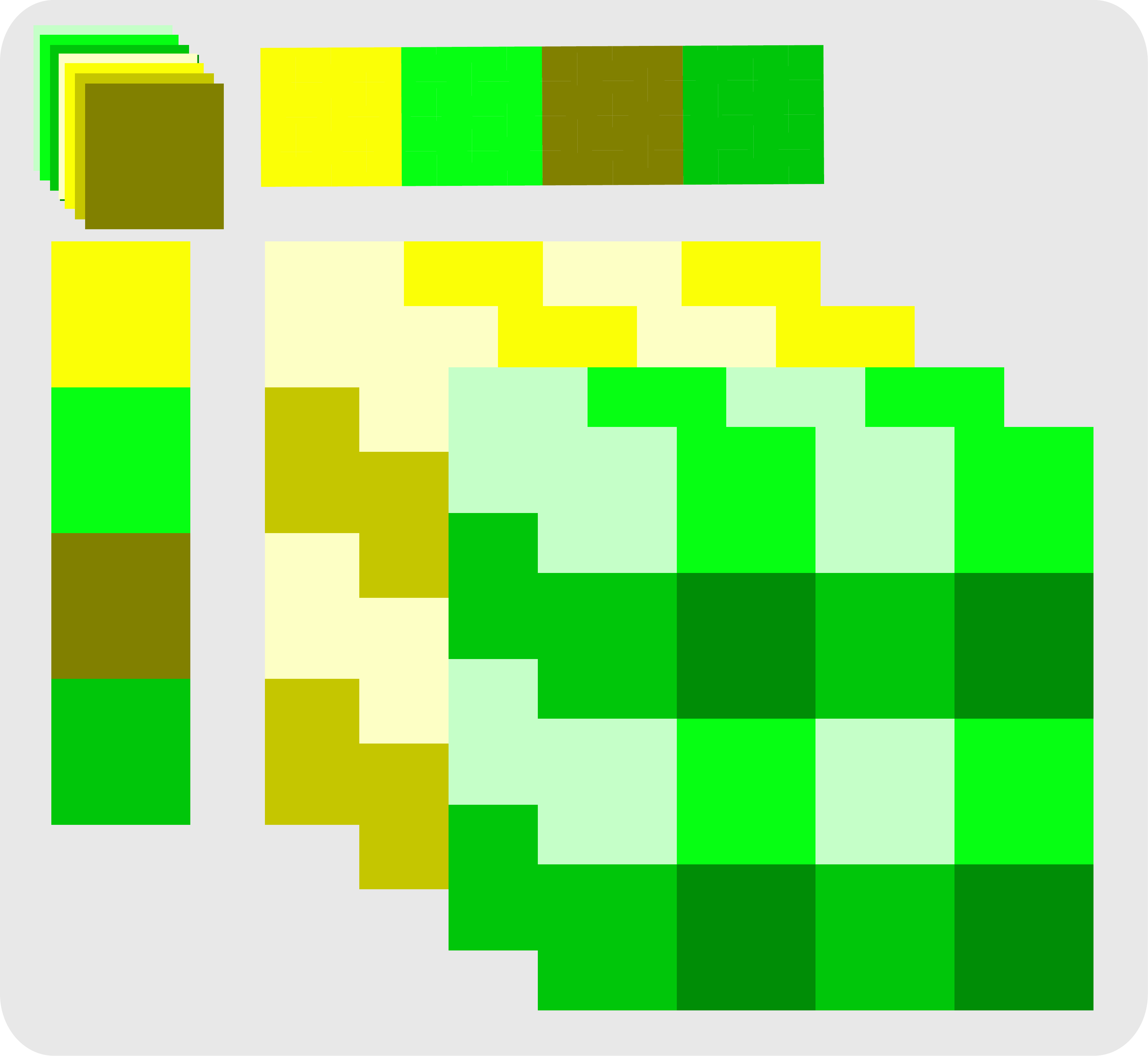}
			\end{tabular}
			&
			\begin{tabular}{l}
				tunable, $\ge \frac{P\cdot M}{N^2}$
			\end{tabular}
		\end{tabular}
		\\
		\textbf{User param. required} & 
		yes~\faThumbsDown & 
		no~\faThumbsOUp & 
		no~\faThumbsOUp & 
		no~\faThumbsOUp 	\vspace{1em}\\
		\textbf{Parallel I/O cost} 
		& 
		$\frac{N^2}{\sqrt{P}} + 
		\mathcal{O}\Big(\frac{N^2}{P}\Big)$
		&
		$\frac{N^2}{\sqrt{P}} + 
		\mathcal{O}\Big(\frac{N^2}{P}\Big)$
		&
		$\frac{5N^3}{P\sqrt{M}} + 
		\mathcal{O}\Big(\frac{N^2}{P\sqrt{M}}\Big)$~\cite{2.5DLU}
		&
		$\frac{N^3}{P\sqrt{M}} + 
		\mathcal{O}\Big(\frac{N^2}{P\sqrt{M}}\Big)$
		\\
		\midrule
		\multicolumn{5}{c}{\textbf{Total comm. volume for $N=4$,$096$ 
				measured/modeled [GB] (prediction \%)}} \\
		$\bm{P=64}$ &
		1.17 / 1.21 (102\%) & 1.18 / 1.21 (102\%)  &  2.5 / 4.9 
		(196\%)& 1.11 / 1.08 (97\%) \\
		$\bm{P=1,024}$ & 4.45 / 4.43 (99\%) & 4.35 / 4.43 (102\%) & 9.3 / 12.13 
		(130\%) 
		& 3.13 / 3.07 (98\%) \\
		\midrule
		\multicolumn{5}{c}{\textbf{Total comm. volume for $N=16$,$384$ 
				measured/modeled [GB] (prediction \%)}} \\
		$\bm{P=64}$ & 18.79 / 19.33 (103\%) &  18.84 / 19.33 (102\%)  &
		39.8 / 78.74 (197\%) & 17.61 / 17.19 (98\%) \\
		$\bm{P=1,024}$ & 70.91 / 70.87 (99.9\%) & 71.1 / 70.87 (99.7\%)  & 144 
		/ 194.09 (135\%) & 45.42 / 44.77 (98\%) \\
		\bottomrule
	\end{tabular}
	\caption{
		\textmd{{Classification and I/O cost models of the 
				measured LU 
				factorization implementations. CANDMC model is taken from the 
				authors~\cite{2.5DLU}. Due to the space constraints, we omit 
				the lower 
				order terms of the models.
		}}
	}
	
	\label{tab:results}
\end{table*}

\noindent \macb{Tournament Pivoting} is shown to be as stable as partial 
pivoting~\cite{tourn_pivot}, which might be an issue for, e.g., incremental 
pivoting~\cite{incrementalPivoting}.
On the other hand, it reduces the $\mathcal{O}(N)$ latency cost of the partial 
pivoting, which requires step-by-step 
column reduction to find consecutive 
pivots,  to $\mathcal{O}\big(\frac{N}{v}\big)$, where $v$ is the tunable block 
size parameter.

\noindent \macb{Row Swapping vs. Row Masking}.
To achieve close to optimal I/O cost, we use 2.5D decomposition. This, 
however, implies that in the presence of extra memory, the matrix data is 
replicated $\frac{PM}{N^2}$ times. 
This increases the row swapping cost
from  
$\mathcal{O}\big(\frac{N^2}{P})$ to 
$\mathcal{O}\big(\frac{N^3}{P\sqrt{M}}\big)$ which asymptotically matches 
the I/O lower bound of the entire factorization. Performing row swapping would 
then increase the 
constant term of the leading factor of the algorithm from 
$\frac{N^3}{P\sqrt{M}}$ to $\frac{2N^3}{P\sqrt{M}}$.
To keep the I/O cost of our algorithm as low as possible, instead of performing 
row-swapping, we only propagate pivot row indices. When the tournament pivoting 
finds the $v$ pivot rows, they are broadcast to all processors with only 
${v}$ cost per step. 

\noindent \macb{Pivoting in {\conflux}}.
In each step \mbox{$t$} of the outer loop (line 1 in 
Algorithm~\ref{alg:conflux}),
\mbox{$\frac{N}{\sqrt{M}}$} processors perform a tournament pivoting routine 
using 
a butterfly communication pattern~\cite{butterfly}. Each processor owns 
$\sqrt{M}\frac{N - vt}{N}$ rows, among which it chooses $v$ local candidate 
pivots. Then, final pivots are chosen in  \mbox{$\log(\frac{N}{\sqrt{M}})$}  of 
``playoff-like'' tournament rounds, after which all  
\mbox{$\frac{N}{\sqrt{M}}$} 
processors own both $v$ pivot row indices and already factored new $A_{00}$.
This result is distributed to all remaining processors (line 2).
Pivot row indices are then used to determine which processors participate in 
the reduction of current  \mbox{$A_{01}$} (line 4).
Then, the new $A_t$ is formed by masking currently chosen rows $A_t \gets 
A_t[rows, v:end]$ (Line 12).

\subsection{I/O cost of 
		\conflux}

{We now prove the I/O cost of 
	\conflux{}, which is only a factor of $\frac{1}{3}$ higher than the lower 
	bound for large $N$.}

\begin{lma}
	{
		The total I/O cost of \conflux, presented in 
		Algorithm{~\ref{alg:conflux}}, is 
		\mbox{$Q_{COnfLUX} = \frac{N^3}{P\sqrt{M}} + 
			\mathcal{O}\left(\frac{N^2}{P} \right)$}.}
\end{lma}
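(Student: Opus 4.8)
The plan is to read off the per-processor, per-step communication costs annotated as comments in Algorithm~\ref{alg:conflux}, sum them over the $N/v$ outer iterations, and then isolate a single leading term from a pool of strictly lower-order contributions. First I would confirm each annotated cost from the parallel decomposition of Figure~\ref{fig:lu_decomp} (processor grid $\frac{N}{\sqrt{M}}\times\frac{N}{\sqrt{M}}\times\frac{PM}{N^2}$). The dominant work is the broadcasting of the freshly factored panels in steps~8 and~10: each sends the $(N-tv)\times v$ submatrix $A_{10}$ (resp.\ $A_{01}$) across the $\frac{N}{\sqrt{M}}$ processor columns (resp.\ rows), at per-processor cost $\frac{(N-tv)Nv}{P\sqrt{M}}$. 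The panel scatters (steps~4,~6) cost $\frac{(N-tv)v}{P}$, the block-column and pivot-row reductions across the $\frac{PM}{N^2}$ depth layers (steps~1,~5) cost $\frac{(N-tv)vM}{N^2}$, and the tournament pivoting and $A_{00}$ distribution (steps~2,~3) cost $v^2\lceil\log(N/\sqrt{M})\rceil$ and $v^2+v$, independent of $t$.

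The computational heart of the proof is the single arithmetic identity
\[
\sum_{t=1}^{N/v}(N-tv)=\frac{N^2}{2v}-\frac{N}{2},
\qquad
\frac{2Nv}{P\sqrt{M}}\left(\frac{N^2}{2v}-\frac{N}{2}\right)=\frac{N^3}{P\sqrt{M}}-\frac{N^2v}{P\sqrt{M}}.
\]
Applying it to the two dominant steps already produces the claimed leading term $\frac{N^3}{P\sqrt{M}}$, with a residual $\frac{N^2v}{P\sqrt{M}}$ that, after substituting the blocking choice $v=a\cdot\frac{PM}{N^2}$, equals $a\sqrt{M}$ and is therefore lower order.

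It then remains to fold everything else into $\mathcal{O}(N^2/P)$. Applying the same summation to steps~4 and~6 gives $\frac{N^2}{P}-\frac{vN}{P}=\mathcal{O}(N^2/P)$, which is precisely what sets the lower-order term; the reductions of steps~1 and~5 sum to $\mathcal{O}(M)$, and the $t$-independent pivoting and distribution of steps~2 and~3 sum to $\mathcal{O}\!\left(Nv\log(N/\sqrt{M})\right)$ and $\mathcal{O}(Nv+N)$. I would check that each of these, together with the residual $\mathcal{O}(\sqrt{M})$, is dominated by $\mathcal{O}(N^2/P)$ in the 2.5D regime where the replication factor $c=\frac{PM}{N^2}$ is bounded (so that $M=\Theta(N^2/P)$ and $v=\Theta(1)$ up to the constant $a$). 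Summing all contributions yields $Q_{COnfLUX}=\frac{N^3}{P\sqrt{M}}+\mathcal{O}(N^2/P)$.

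The main obstacle is not the summation but justifying the per-step costs rigorously. In particular, the $\frac{(N-tv)Nv}{P\sqrt{M}}$ figure for steps~8 and~10 requires careful accounting of how each panel element is replicated across the $\frac{N}{\sqrt{M}}$ processors of a grid row/column, and it is this factor-of-two broadcast that dictates the constant in front of $\frac{N^3}{P\sqrt{M}}$. A closely related subtlety is verifying that the \emph{row-masking} pivoting strategy genuinely avoids the extra $\frac{N^3}{P\sqrt{M}}$ that naive row swapping would incur under $\frac{PM}{N^2}$-fold replication; without this, the leading constant would double. A secondary difficulty is confirming that the lower-order bookkeeping (especially the $\mathcal{O}(M)$ and $\mathcal{O}(\sqrt{M})$ pieces) stays within $\mathcal{O}(N^2/P)$ across the full admissible parameter range rather than only for constant replication.
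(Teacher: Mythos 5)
Your proposal follows essentially the same route as the paper's proof: verify the per-step, per-processor costs on the $\frac{N}{\sqrt{M}}\times\frac{N}{\sqrt{M}}\times\frac{PM}{N^2}$ grid, identify the two panel broadcasts as the source of the $\frac{2Nv(N-tv)}{P\sqrt{M}}$ leading term, and sum over the $N/v$ iterations to get $\frac{N^3}{P\sqrt{M}}+\mathcal{O}(N^2/P)$. Your explicit summation identity and your flagged concern about whether the $\mathcal{O}(M)$ reduction terms stay within $\mathcal{O}(N^2/P)$ for large replication factors are, if anything, slightly more careful than the paper's own treatment, but the argument is the same.
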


\begin{proof}
	{
		We assume that the input matrix  \mbox{$A$} is already 
		distributed in the block cyclic layout imposed by the 
		algorithm. Otherwise, any data reshuffling imposes only a
		\mbox{$\Omega\big(\frac{N^2}{P}\big)$} cost, which does not contribute 
		to 
		the 
		leading order term.
		We first derive the cost of a single iteration~\mbox{$t$} of the 
		main loop of the algorithm, proving its 
		cost to be \mbox{$Q_{step}(t) = \frac{2Nv(N-tv)}{P\sqrt{M}} + 
			\mathcal{O}\left(\frac{Nv}{P} \right)$}.
		Then, the total cost after  \mbox{$\frac{N}{v}$} iterations is:}
	
	{\footnotesize
	$$Q_{COnfLUX} = \sum_{t=1}^{\frac{N}{v}}Q_{step}(t) = 
	\frac{N^3}{P\sqrt{M}} 
	+ 
	\mathcal{O}\left(\frac{N^2}{P} \right). $$
	}

		We denote 
		\mbox{$P1 = \frac{N^2}{M}$ and $c = \frac{PM}{N^2}$}.
		\mbox{$P$} processors are decomposed in the 3D grid 
		\mbox{$[\sqrt{P1}, \sqrt{P1},c]$}. We refer to all processors 
		which share the same second and third coordinate as 
		\mbox{$[:, j, k]$}. We now examine each of 11 
		steps of Algorithm 2.
		
		{\noindent}
		\textbf{Step 1.}  \mbox{$[:, t \mod \sqrt{P1}, t \mod c]$} 
		processors perform the tournament pivoting. Every 
		processor owns first \mbox{$v$} elements of \mbox{$N - (t-1)v$} 
		rows, among which they choose the next  \mbox{$v$} pivots. 
		First, they locally perform the LUP decomposition to 
		choose local  \mbox{$v$} candidate rows. Then, in 
		\mbox{$\lceil \log_2(\sqrt{P1})\rceil$} rounds they exchange 
		\mbox{$v \times v$} blocks to decide on the final pivots. After the 
		exchange, these processors also hold the factorized 
		submatrix  \mbox{$A_{00}$}. 
		\emph{I/O cost per proc.:}  \mbox{$v^2 \lceil 
		\log_2(\sqrt{P1})\rceil$}. 
		
		{\noindent}
		\textbf{Steps 2, 3, 5.} Factorized 
		\mbox{$A_{00}$}
		and  \mbox{$v$} pivot row indices 
		are broadcast. First  \mbox{$v$} 
		columns and \mbox{$v$} 
		pivot rows
		are scattered to all  \mbox{$P$}.
		\emph{I/O cost 
			per proc.:}  \mbox{$v^2 + v + \frac{2(N-tv)v}{P}$}. 
		
		{\noindent}
		\textbf{Steps 4 and 11.} Reduce \mbox{$v$} columns and \mbox{$v$} 
		pivot rows. With high probability, pivots are evenly distributed among 
		all 
		processors.
		 There are  \mbox{$c$} layers to reduce, 
		each 
		of size 
		\mbox{$(N-tv)v$}.
		\emph{I/O cost per proc.:} 
		\mbox{$\frac{(N-tv)vc}{P} = \frac{2(N-tv)vM}{N^2}$}. 
		
		{\noindent}
		\textbf{Steps 6, 8, 10.} The updates \mbox{$FactorizeA_{10}$}, 
		\mbox{$FactorizeA_{01}$}, and \linebreak \mbox{$FactorizeA_{11}$} are 
		local and 
		incur no additional I/O cost.
		
		{\noindent}
		\textbf{Steps 7 and 9.} Factorized  \mbox{$A_{10}$} and \mbox{$A_{01}$} 
	are scattered among all processors.
		Each processor requires  \mbox{$\frac{v(N-tv)}{c\sqrt{P1}}$} elements 
		from  
		\mbox{$A_{10}$} and \mbox{$A_{10}$}.
		\emph{I/O cost per proc.:} 
		\mbox{$\frac{2(N-tv)Nv}{P\sqrt{M}}$}. 
		
		{\noindent}
		Summing steps 1-11:
		\mbox{$Q_{step}(t) = \frac{2Nv(N-tv)}{P\sqrt{M}} + 
			\mathcal{O}\left(\frac{Nv}{P} \right)$}.
\end{proof}

\begin{figure*}[t]
	\centering	
	\subfloat[Communication volume per node for varying node 
	counts $P$ and a 
	fixed  $N=16,384$. Only the leading factors of the models 
	are shown. The models are scaled by the element size (8 
	bytes).]
	{\includegraphics[width=\columnwidth]{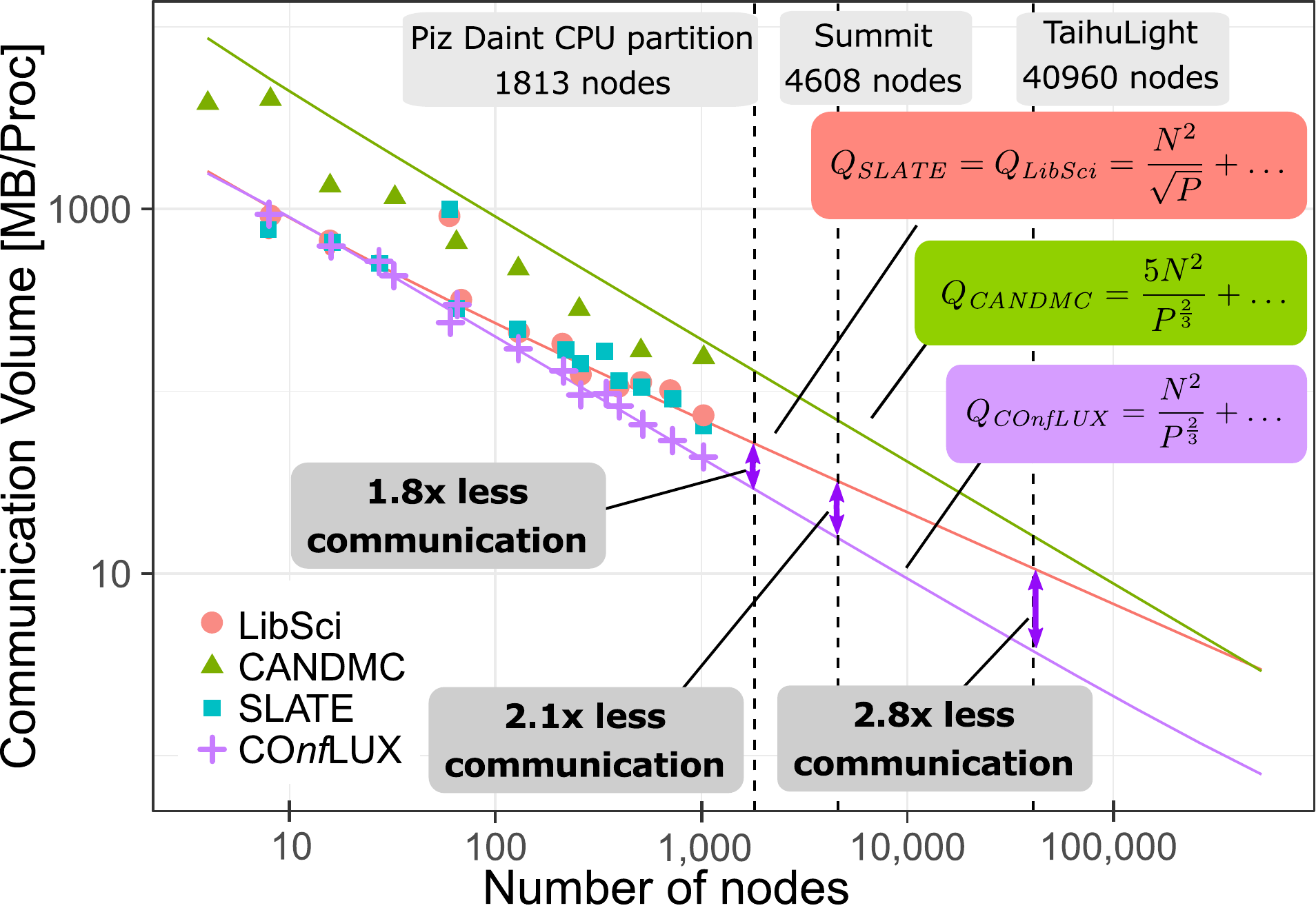}
		\label{fig:square_strong}}
	\hspace{1em}	
	\subfloat[Communication volume per node for weak scaling 
	(constant work per 
	node), 
	{\mbox{$N=3200 \cdot \sqrt[3]{P}$}}. 2.5D algorithms 
	(CANDMC and \conflux) retain constant 
	communication volume per processor.]
	{\includegraphics[width=\columnwidth ]{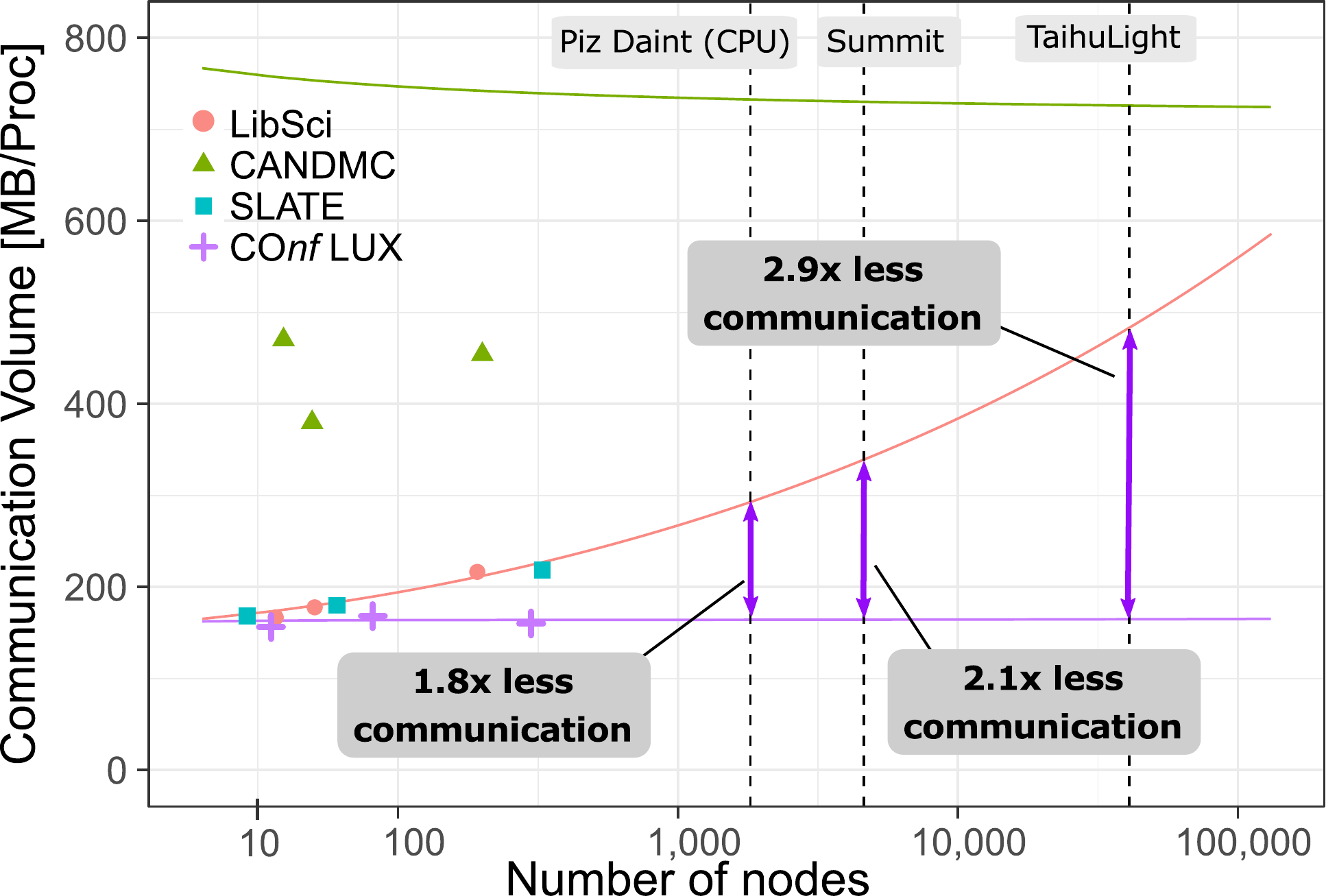}	
		\label{fig:square_weakp1}}
	\caption{		
		\textmd{Communication volume measurements across 
			different scenarios 
			for LibSci, SLATE, CANDMC, and \conflux. In all 
			considered 
			scenarios, enough memory $M\ge 
			\frac{N^2}{P^{2/3}}$ was present 
			to allow the maximum number of replications 
			$c = P^{1/3}$.}					
	}
	\label{fig:commVolPlots}
	
\end{figure*}

\section{Experimental Evaluation}
\label{sec:evaluation}

\begin{figure}[t]
	\includegraphics[width=\columnwidth]{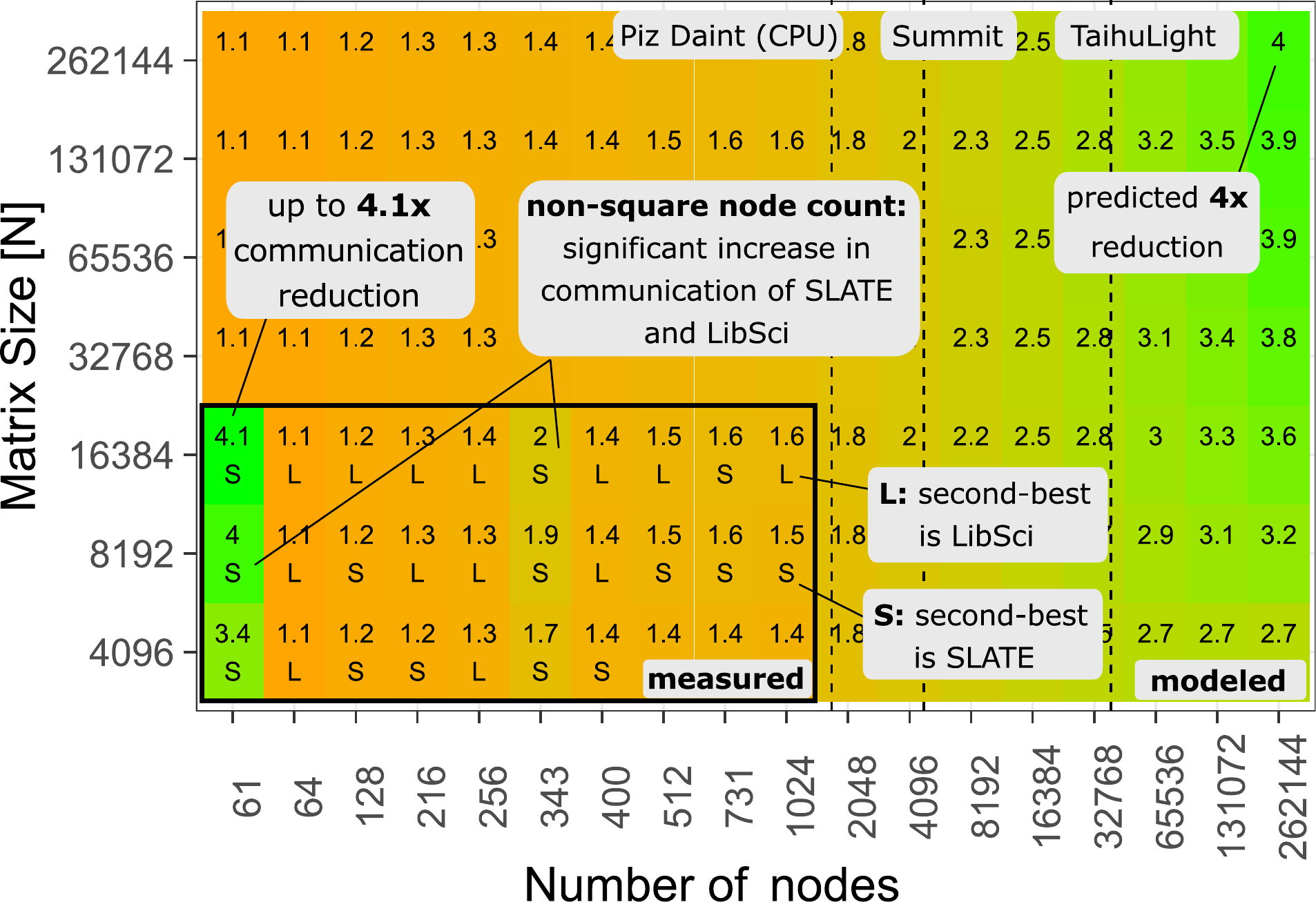}	
	
	\caption{		
		\textmd{Communication reduction vs. second-best 
			algorithm (L=LibSci, 
			S=SLATE), for varying $P$, $N$, for both measured 
			and 
			predicted 
			scenarios.}}
	\label{fig:heatmap}
\end{figure}

We implement \conflux and compare it with state-of-the-art implementations of 
distributed LU 
factorization. We measure their I/O complexity by counting their 
aggregated communication volume in distributed runs. We provide both measured 
values and theoretical cost models, on a variety of problem sizes and number of 
nodes based on scientific computing applications.

\noindent
\macb{Implementation.}
We implement \conflux in C++ using MPI one-sided~\cite{mpi3-rma-overview} for 
inter-node communication.
To secure the best performance for all combinations of processor counts and 
matrix sizes, we use Processor Grid 
Optimization~\cite{COSMA}, which finds the 3D 
processor grid with the lowest communication cost by possibly disabling a minor 
fraction of nodes. Other implementations, which greedily try to utilize all 
resources, often find communication-suboptimal decompositions for 
difficult-to-factorize number 
of ranks.  

\noindent
\macb{Infrastructure and Measurement.} 
We run our experiments on the CSCS Piz Daint supercomputer, which 
comprises 5,704 XC50 nodes equipped with Intel Xeon E5-2690 v3 processors (12 
cores, 64 GiB DDR3 RAM), interconnected by the Cray Aries network with a 
Dragonfly network topology. To measure communication volume, we instrument the 
implementations with the Score-P library~\cite{score-p} and count 
the aggregate 
bytes sent over the network.

\noindent
\macb{Comparison Targets.} For comparison, we use 1) the vendor-optimized 
ScaLAPACK implementation on Piz Daint (Cray \texttt{LibSci} v19.06.1). While 
the library is proprietary, our measurements 
reaffirm that, like ScaLAPACK, the implementation uses the suboptimal 2D 
processor decomposition; 2) SLATE~\cite{slate} --- a 
state-of-the-art distributed linear algebra framework 
targeted at exascale supercomputers; 3) the latest 
version of the CANDMC 
library~\cite{candmccode}, which uses the asymptotically-optimal 2.5D
decomposition. The implementations and their characteristics are listed in 
Table 
\ref{tab:results}.

\noindent
\macb{Problem Sizes.} 
We choose our benchmarks to reflect problems in scientific computing. 
Specifically, we choose $4,096\le N \le 16,384$. For example, Physical 
Chemistry or Density 
Functional Theory (DFT) simulations require factorizing matrices of atom 
interactions, yielding sizes of $N\ge 10,000$~\cite{gb19}. 
For node count, we measure the algorithms starting from small square and cube 
nodes ($P=4,8$) up to $P=1,024$, reflecting different scales for various 
use-cases.
In other domains, matrix sizes can be larger --- the High-Performance Linpack 
benchmark uses a maximal size of $N=16,473,600$~\cite{top500}, and in quantum 
physics matrix size scales with $2^{\text{qubits}}$. Therefore, we 
extrapolate our models to match these problem sizes and the number of 
processors on the current top supercomputers (Summit, TaihuLight) and show 
predicted communication results.

\noindent \macb{Theoretical Models.}
Together with empirical measurements, we put significant effort into 
understanding the underlying communication patterns of the compared LU 
factorization implementations. {Both LibSci and SLATE base on the standard 
	partial pivoting algorithm using the 2D decomposition{~\cite{scalapack}}. 
	For 
	CANDMC, we 
	use the model provided by the authors{~\cite{2.5DLU}}. For 
	{\conflux}, we use the results from Section{~\ref{sec:conflux}}. These 
	models 
	are summarized in Table{~\ref{tab:results}}. }

\section{Results}
\label{sec:results}

Our experiments confirm a clear advantage of \conflux in terms of 
communication volume over all other implementations tested. Not only do the 
measured values exhibit a significant communication reduction (1.42 times 
compared with the second-best implementation for $P=$ 1,024), 
but the performance models predict even greater benefits for larger runs
 (expected 
2.1 times communication reduction for a full-machine run on the Summit 
supercomputer).

\noindent\macb{Scaling Experiments.}
Fig.~\ref{fig:square_strong} presents the {measured }communication volume 
per node, as well as our derived cost models (Table{~\ref{tab:results}}) 
	presented with solid lines, 
for $N=$ 16,384. Observe that \conflux communicates the least for all values of 
$P$. Furthermore, thanks to the Processor Grid Optimization, it always finds 
the best processor grid given available resources, resulting in 
smooth and predictable performance. Other implementations try to aggressively 
use all available resources, which leads to suboptimal performance and 
visible outliers with highly increased communication, as seen in the inset. 
Note that 
since 
both LibSci and SLATE use similar 2D decomposition, their communication 
volumes are mostly equal, with a slight advantage of SLATE 
for non-square processor grids.
In 
Fig.~\ref{fig:square_weakp1}, we show the weak scaling characteristics of the 
analyzed implementations. Observe that for a fixed work per node, the 2D 
algorithms - LibSci and SLATE - scale sub-optimally. 

\noindent\macb{Implications for Exascale.}
Figure~\ref{fig:heatmap} summarizes the communication volume reduction of 
\conflux compared with the second-best implementation, both for measurements 
and 
theoretical predictions. It can be seen that in all combinations of $P$ and 
$N$, \conflux always communicates less. For all measured data points, 
the asymptotically optimal CANDMC performed worse than LibSci 
or SLATE. 
The figure also presents the predicted communication cost of all considered 
implementations for up to $P=$ 262,144, based on our theoretical models.
Considering the use of one (MPI) process per socket and/or accelerator of each 
node,
such scales will be attainable in the near future.
Observe that (a) the asymptotically optimal CANDMC is predicted to 
communicate less than suboptimal 2D implementations only for $P>$ 450,000 ranks 
for $N=16,384$, \emph{showing that asymptotic optimality is not 
	enough to secure practical performance}; and (b) for a full-scale run on 
Summit, 
\conflux is expected to communicate 2.1 times less than SLATE, a
library designed specifically for such machines.

\section{Related Work}

\begin{table*}	
	\footnotesize
	\begin{tabular}
		{p{1.1cm}p{5cm}p{4.9cm}p{5.25cm}}
		\toprule
		& \textbf{Pebbling}~\cite{sethi1975complete, 
			bruno1976code, 
			jia1981complexity, redbluewhite, COSMA} & 
		\textbf{Projection-based}~\cite{demmel1, demmel2, 
			demmel3, demmel4, 
			ballard2011minimizing, olivry2020automated} 
		& 
		\textbf{Problem specific}~\cite{aggarwal1988input, 
			benabderrahmane2010polyhedral, 
			mehta2014revisiting, darte1999complexity, gb19}
		\\
		\midrule
		\textbf{Scope} & 
		\faThumbsOUp \faThumbsOUp ~General cDAGs & 
		\makecell[tl] {\faThumbsOUp 
			~Programs with 
			static geometric\\\hspace{1.4em}structure of 
			iteration space}  & 
		\faThumbsDown ~Individually 
		tailored for given problem  \\
		\addlinespace
		\textbf{Key\newline Features} &
		\makecell[tl] {
			\faThumbsOUp ~General scope - can handle 
			irregular \\
			\hspace{1.4em}program structures \\
			\faThumbsOUp ~Expresses complex data
			dependencies \\
			\faThumbsOUp ~Directly exposes schedules \\
			\faThumbsOUp ~Intuitive \\
			\faThumbsDown ~P-SPACE complete 
			in general case \\
			\faThumbsDown ~No guarantees that a solution  
			exists \\
			\faThumbsDown ~No well-established method
			how to \\
			\hspace{1.4em}automatically translate
			code to cDAGs 
		} 
		& 
		\makecell[tl] {
			\faThumbsOUp ~Well-developed theory and tools  \\
			\faThumbsOUp ~Guaranteed to find solution \\
			\hspace{1.4em}for given class of 
			programs \\
			\faThumbsDown ~Bounds are often not tight \\
			\faThumbsDown ~Fails to capture dependencies\\
			\hspace{1.4em}between statements \\
			\faThumbsDown ~Limited scope
		}
		& 
		\makecell[tl] {
			\faThumbsOUp ~Takes advantage of
			problem-specific  \\
			\hspace{1.4em}features\\
			\faThumbsOUp ~Tends to provide best 
			practical results \\
			\faThumbsDown ~Requires large manual effort \\
			\hspace{1.4em}for each algorithm separately \\
			\faThumbsDown ~Difficult to generalize \\
			\faThumbsDown ~Often based on heuristics\\
			\hspace{1.4em}with no guarantees on optimality 
		}
		\\
		\bottomrule
	\end{tabular}
	\caption{Overview of different approaches to modeling 
		data 
		movement.
	}
	
	\label{tab:stateoftheart}
\end{table*}

Data movement analysis, while being 
prevalent for decades, has branched in multiple directions, 
In summary, previous 
work can be categorized into three classes (see Table~\ref{tab:stateoftheart}): 
(1) 
work based on direct 
pebbling or variants of it, such as Vitter's block-based 
model~\cite{vitter1998external}; (2) works using geometric arguments 
of projections based on the Loomis Whitney 
inequality~\cite{loomisWhitney}; and 
(3) works applying optimizations limited to specific structural 
properties of computations such as affine loops~\cite{affineloops}, 
and more generally, the polyhedral model program 
representation~\cite{benabderrahmane2010polyhedral, 
	mehta2014revisiting, olivry2020automated}. Although the scopes of those 
	approaches 
significantly overlap --- for example, kernels like matrix multiplication can 
be captured by most of the models --- there are still important 
differences both in methodology and end-results they provide, as summarized in 
Table~\ref{tab:stateoftheart}.

Dense linear algebra operators are among the standard core kernels in 
scientific applications.
Ballard et al.{~\cite{ballard2011minimizing}} 
present a comprehensive overview of their asymptotic I/O lower bounds and 
I/O minimizing schedules, both for sparse and dense matrices. Recently, Olivry 
et al. introduced IOLB~\cite{olivry2020automated} --- an automated framework 
for assessing sequential lower bounds for polyhedral programs. However, their 
computational model disallows recomputation, and therefore cannot capture 
programs like the one presented in Section~\ref{sec:output_reuse}. 

As such, linear solvers are implemented in various libraries for shared-memory 
environments~\cite{plasma,lapack,anotherLU, eigen, mkl, cusolver,magma}. For 
distributed 
memory, vendor-optimized libraries~\cite{libsci,mkl} typically implement the 
ScaLAPACK interface~\cite{scalapack}, and are based on 2D decomposition, as we 
empirically verify (Section \ref{sec:evaluation}).  On the algorithmic side, 
research is conducted into implementing communication-avoiding solvers with 
2.5D~\cite{candmc,2.5DLU}, and 3D 
decomposition~\cite{ballard18qr3d,choleskyQRnew} strategies. For heterogeneous 
hardware (e.g., GPU-accelerated) systems, recent frameworks focus on 
implementing modified interfaces for asynchronous offloading~\cite{dplasma}, 
and fine-grained task parallelism~\cite{chameleon,slate}.


\section{Conclusions}

{In this work, we present a novel method of analyzing DAAP --- a general 
	class of programs that covers many fundamental computational motifs.}
We show, both theoretically and in practice, that our pebbling-based approach 
for deriving the I/O lower bounds is \textbf{more general:} programs with 
disjoint array accesses cover a wide variety of applications, 
\textbf{more powerful:} it can explicitly capture inter-statement dependencies, 
\textbf{more precise:} it 	derives tighter I/O bounds, and \textbf{more 
	constructive:} \xpart provides powerful hints for obtaining parallel 
	schedules. 

When applying the approach to LU factorization, we were able to derive new 
lower bounds, as well as the \conflux schedule. Not only is \conflux 
asymptotically optimal, but we also see that in practice, the reduction in the 
leading term yields communication volumes that are better than state-of-the-art 
2D \textit{and} 3D decomposition, by a factor of up to 4.1$\times$. This 
promising result mandates the exploration of the parallel pebbling strategy to 
algorithms such as Cholesky factorization, other nontrivial dense linear 
algebra 
kernels, and beyond.

\bibliographystyle{IEEEtrans}

\bibliography{refs}

\end{document}